\newtheorem{property}{Property}
\newtheorem{definition}{Definition}
\newtheorem{assumption}{Assumption}
\newtheorem{lemma}{Lemma}
\newtheorem{theorem}{Theorem}{}
{}
\newtheorem{remark}{Remark}{}
\begin{document}

\title{Adaptive Feedforward Neural Network Control with  an Optimized Hidden
Node Distribution} 

\author{Qiong~Liu,~\IEEEmembership{} 
        Dongyu~Li,~\IEEEmembership{}             
        Shuzhi~Sam~Ge,~\IEEEmembership{}and 
        Zhong~Ouyang~\IEEEmembership{}

 \thanks{ Qiong Liu is with Tsinghua-Berkeley Shenzhen Institute, Tsinghua University,  Shenzhen, Guangdong, 518055 China.}
  \thanks{ Dongyu Li is with the School of Cyber Science and Technology, Beihang University, Beijing
100191, China. (Corresponding author: Dongyu Li, dongyuli@buaa.edu.cn)}
\thanks{ Shuzhi Sam Ge is with Department of Electrical and Computer Engineering, National University of Singapore, Singapore 117576.}
\thanks{ Zhong Ouyang is with Department of Mechanical and Aerospace Engineering, The Ohio State University, Columbus, OH 43202, USA.} 
}


\maketitle

\begin{abstract}
	Composite adaptive radial basis function neural network (RBFNN) control with a lattice distribution of hidden nodes has three inherent demerits: 1) the approximation domain of adaptive RBFNNs is difficult to be determined a priori; 2) only a partial persistence of excitation (PE) condition can be guaranteed; and 3) in general, the required number of hidden nodes of RBFNNs is  enormous. 
	This paper proposes an adaptive feedforward RBFNN controller with an optimized distribution of hidden nodes to suitably address the above demerits. The distribution of the hidden nodes calculated by a K-means algorithm is optimally distributed along the desired state trajectory.  
	The adaptive RBFNN satisfies the PE condition for the periodic reference trajectory. The weights of all hidden nodes will converge to the optimal values. This proposed method considerably reduces the number of hidden nodes, while achieving a better approximation ability.
The proposed control scheme shares a similar rationality to that of the classical PID control in  two special cases, which can thus be seen as an enhanced PID scheme with a better approximation ability. 
	 For the controller implemented by digital devices, 
	 the proposed method, for a manipulator with unknown dynamics, potentially achieves  better control performance than model-based schemes with accurate dynamics. 
	  Simulation results demonstrate the effectiveness of  the proposed scheme. This result provides a deeper insight into the coordination of the adaptive neural network control and the deterministic learning theory.
\end{abstract}

\begin{IEEEImpStatement}
	Adaptive RBFNN control   learns to control a robot manipulator when both the structures and parameters of the target robot are unknown in advance. 		
Unfortunately, current adaptive RBFNN controllers need a large-scale neural network to approximate the dynamics of the robot manipulator, and the learning performance cannot be guaranteed to converge.
	The proposed method in this paper not only reduces the scale of neural networks to substantially  alleviate the computational burden but also evidently achieves better learning performance. 
Simulation examples show  that this method increases the control accuracy by more than $9$ times and reduces the scale of neural networks by $35$ times as compared to the traditional lattice scheme.
	Intuitively, people usually believe that a model-based controller with an accurate dynamic model may achieve the best control performance. 	However, compared with the model-based controller with an accurate dynamic model,  the proposed control scheme with an unknown dynamic model even further increases the control accuracy by  1.5 times.
	This technology provides a more straightforward path for engineers, who may be not   experts in  complicated control system analysis methods,  to design an adaptive robotic controller to achieve enhanced performance.
\end{IEEEImpStatement}

\begin{IEEEkeywords}
Adaptive neural network control, Deterministic learning,  Persistence of excitation.
\end{IEEEkeywords}

\section{Introduction}

%
%
%
%

\IEEEPARstart{A}{daptive} radial basis function neural network (RBFNN) control is an effective way to handle   uncertainties of   system dynamics when both the structures and parameters are unknown \cite{peng2019force, Hewei2018, shi_P_2020_Cybernetics,arabi2019neuroadaptive }.
  	RBFNNs with deterministic hidden nodes have a higher learning speed than both multilayer neural networks and RBFNNs with adjustable hidden nodes \cite{ge2001stable}.
The learning mechanism of the adaptive RBFNN control with   deterministic hidden nodes was named deterministic learning in \cite{CongWang2006}.
  Generally, there are two structures to accomplish adaptive RBFNN control:  composite adaptive RBFNN control and  adaptive feedforward RBFNN control.
  
   Composite adaptive RBFNN control is derived from composite adaptive control  \cite{slotine1989composite}.  
It is widely utilized in the adaptive RBFNN control community due to   superior performance and rigorous   proof. However, the following essential  issues  are to be further investigated:
  \begin{itemize}
  \item[1)]
  The inputs of the RBFNN include the desired states and the state errors, but the state errors are hard to be known a priori.  Therefore, it is difficult to determine the approximation domain of the RBFNN.
When the inputs leave the approximation domain, the outputs of the RBFNN vanish. This fails the approximation\cite{hewei7994622, ren2010adaptive, gao8879661, yang2018robot}. 
 	The sliding-mode control scheme can push the states to the domain again \cite{Sanner1992, zhao2007locally}, but it needs extra information about the   target system. 
 	The Barrier Lyapunov   function-based method  constrains the state errors in the designed intervals   \cite{tee2009barrier, liu2016barrier, Gao8811752, huang2019motor}, whereas it complicates the controller design and needs hardware with a high sampling rate.

  \item[2)] 
  
Only a partial persistence of excitation (PE) condition, i.e., the PE condition of a certain regression subvector constructed out of the RBFs along the periodic system trajectory, is proven to be satisfied in \cite{CongWang2006, Wang2015PElevel, Wang2016PElevel,  Wang2017PElevel, Wang2019_he}.  
 Therefore, only the corresponding regression subvector weights converge to their optimal values for a periodic reference trajectory. 
 For the other hidden nodes  (that do not satisfy the PE condition), their weights might not converge to their optimal values. {This substantially }degrades the robustness of the controller. 
	In addition, the partial PE condition needs a three-step procedure to be guaranteed in  \cite{CongWang2006}.

  \item[3)] The number of hidden nodes under a lattice distribution is $m^p$, where $m$ is the number of the hidden nodes in each dimension and $p$ is the dimension of the input vectors of the RBFNN.  It grows exponentially with respect to the dimension of the input vectors and grows polynomially with respect to the number of the hidden nodes in each dimension.
	The dimension of the input vectors is determined by both the controller structure and the degree of freedom (DOF) of the controlled system. In addition, to achieve better approximation performance, we need more hidden nodes in each channel.
	 For high DOF robot manipulators, as the values of $m$ and $q$ become large, the number of the hidden nodes becomes unacceptably huge, hence  inevitably limiting the application in  practical electrical devices \cite{chen_wang2019}.
   
  
  \end{itemize}
  
  Adaptive feedforward RBFNN control is derived from PD-plus-feedforward control \cite{slotine1987on}.
 PD-plus-feedforward control has a simple structure, and the inputs of its  feedforward term are only the desired state.
However, compared with the composite adaptive control structure, the PD-plus-feedforward control structure is seldom utilized in adaptive RBFNN controllers. The possible reasons may be as follows:
  the control gains should be large enough to suppress the residual error between the composite dynamics and the feedforward dynamics, whereas the explicit value of control gains cannot be determined for unknown dynamics; this undermines the rigorousness of stability analysis. 
  However, from practical applications in robotic manipulators, PD-plus-feedforward control has similar control performance to computed torque control \cite{Chae1987}  or even better control performance under background noise and imprecise system dynamics
 \cite{khosla1988experimental,  reyes2001experimental}. The adaptive feedforward RBFNN control with a lattice distribution of hidden nodes proposed in \cite{chen2012globally, pan2016hybrid, pan2016biomimetic} partially solves the aforementioned issues as follows:

\begin{itemize}
\item[1)]  The inputs of the adaptive RBFNN are the desired trajectories of the robotic manipulator, known a priori. Based on the specific inputs, the approximation domain can be properly determined.

\item[2)] The periodic states of the closed-loop system in step (3) of \cite{CongWang2006} are no longer required. The PE condition is guaranteed beforehand  by utilizing the desired periodic states as the inputs of adaptive RBFNNs.

\item[3)] The dimension of inputs is reduced to $3n$, where $n$ is the DOF of the robotic manipulator; this simplifies the control structure and reduces the number of hidden nodes. 
\end{itemize}


	Inspired by the above literature, we propose an adaptive feedforward RBFNN control scheme with an  optimized distribution of hidden nodes. 
	The controller includes a PD term and an adaptive feedforward RBFNN term. 
	The position of the hidden nodes calculated by the K-means algorithm is optimally distributed along the desired state trajectory which is essentially different from the lattice distribution.  
	Compared with  the existing adaptive RBFNN control methods, this paper has the following improvements on the remaining issues mentioned above while also considering other aspects:

\begin{itemize}


\item[1)] This paper proposes an adaptive feedforward RBFNN controller with an optimized distribution of hidden nodes. The PE condition, rather than the partial PE \cite{CongWang2006,   Wang2017PElevel, Wang2016PElevel, Wang2015PElevel, pan2016biomimetic, Wang2019_he} of the adaptive RBFNN controller, is guaranteed before the control process. This brings an attractive advantage that all weights of the RBFNN would converge to their optimal values.
	
\item[2)] 
Compared with the adaptive RBFNN control with lattice hidden nodes \cite{Sanner1992,  CongWang2006, sun2018fuzzy, zeng2014learning}, our proposed scheme reduces the number of hidden nodes significantly whereas  it has better approximation and control performance. The number of hidden nodes are 1403 in \cite{Sanner1992}, 441 in \cite{CongWang2006},  $2^8$  in \cite{sun2018fuzzy}, and $4^6$ in \cite{zeng2014learning},  but our scheme only uses $20$ hidden nodes.
 
\item[3)] We find two unique relations between the PID control and the adaptive feedforward RBFNN control with an optimized distribution of hidden nodes:  i. when the width of the RBFNN approaches infinity,  the controller degrades to a PID controller; and ii. for the set-point tracking problem, the adaptive feedforward RBFNN controller with an optimized distribution of hidden nodes is the same as the PID controller. Thus, the PID controller can be seen as a special case of  the adaptive feedforward RBFNN control with an optimized distribution of hidden nodes. Few works of literature explain the adaptive RBFNN control from such perspectives.


\item[4)] For controllers implemented on digital devices, our proposed scheme with the unknown dynamics has the potential to achieve  better control performance than the model-based control schemes with accurate dynamics, which has been fully shown in simulation results.
  \end{itemize}

    The rest of this paper is organized as follows.
Section \ref{Problem_Formulation} presents the problem formulation and preliminaries. 
     The main results are given in Sections \ref{control_design}. In Section  \ref{Discussions}, we explain the rationality that the controller can be treated as an enhanced PID controller.
      Simulation examples are presented in Section \ref{Simulation}. Finally, conclusions are given in Section     \ref{Conclusion}.  
     In this paper, $\| \cdot  \|$  stands for Euclidean norm of vectors or induced norm of matrices.  $\mathbb{R}$, $\mathbb{R}^+$, $\mathbb{R}^n$, $\mathbb{R}^{n\times m}$ represent  the set of real numbers, positive real numbers,   real $n$-vectors,  real $n \times m$  matrices, respectively. \( \lambda_{\max }(\cdot) \) and \( \lambda_{\min }(  \cdot ) \) denote the largest and smallest  eigenvalues of the corresponding square matrix. 
     $\min( \cdot)$ and $\max( \cdot)$ are the minimum and maximum values of the corresponding function. $\mathcal{C}^{k}$ represents the $k-$order derivative of a continuous function. $L_\infty$ denotes the space of bounded signals. $\Omega_{x}:=\{x| \|x\| \leq \bar{x} \}$ is a ball of radius $\bar{x}$ with $\bar{x}$ being a positive constant.

\section{Problem Formulation and Preliminaries}  \label{Problem_Formulation}

Here, we first introduce the dynamics of robot manipulators in Section \ref{Dynamics_Descriptions}, and then an RBFNN is introduced to approximate the unknown dynamics of a manipulator in Section \ref{Function_Approximation}. In Section \ref{K-means_Algorithm},  the K-means algorithm is introduced to improve the approximation ability of an RBFNN by optimally distributing its hidden nodes along the desired state trajectory, and then the PE condition for adaptive RBFNN control with an optimized distribution of hidden nodes is studied in Section \ref{PE_4}.

\subsection{Dynamics Descriptions} \label{Dynamics_Descriptions}
Consider the class of robot manipulators described as follows \cite{ortega2013passivity}:
\begin{equation}\label{system}
M(q)\ddot{q}+C(q,\dot{q})\dot{q}+G(q)=\tau,
\end{equation}
where $q,\dot{q},\ddot{q}\in \mathbb{R}^n$ are  the vectors of the joint positions, joint velocities and joint accelerations, respectively, $\tau\in \mathbb{R}^n$ is the input torque vector, $M(q)\in \mathbb{R}^{n\times n}$ is the inertia matrix, $C(q,\dot{q})\in \mathbb{R}^{n\times n}$ is the Coriolis matrix, $G(q)\in \mathbb{R}^n$ is the gravity force and $n$ is the  DOF of the system. In addition, $q$ and $\dot{q}$ are measurable and the dynamics is unknown.


\begin{property} \label{bounded_M}
The matrix $M(q)$ is a symmetric and positive definite matrix and satisfies $\lambda_m I \leq M(q) \leq \lambda_MI$, where $\lambda_m$ and $\lambda_M$ are the minimum and maximum eigenvalues of $M(q)$.
\end{property}

\begin{property}\label{skew}
The matrix $\dot{M}(q)-2C(q,\dot{q})$ is skew-symmetric, and $z^T\big(\dot{M}(q)-2C(q,\dot{q})\big)z=0$, $\forall z \in \mathbb{R}^n$.
\end{property}


\begin{assumption}\label{periodic}
The desired state trajectory $Z_d= [q_d^T, \dot{q}_d^T,\ddot{q}_d^T ]^T$ is periodic continuous and bounded such that  $\|Z_d\| \leq \bar{Z}_d$ with $ \bar{Z}_d\in \mathbb{R}^+$ being a positive constant.
\end{assumption}


\subsection{Function Approximation}\label{Function_Approximation}
The activation functions of adaptive RBFNNs are the Gaussian function taking the following form:
\begin{equation}
S_{j}(Z)=\exp\big[-\frac{(Z-\mu_j)^T(Z-\mu_j)}{\sigma^2} \big],  j=1,2,\dots,m,
\end{equation} 
where $\mu_j=[\mu_{j1},\mu_{j2},\dots, \mu_{jp}]^T \in \mathbb{R}^q$ is the position of the hidden node,  $Z=[Z_1,Z_2,\dots, Z_p]^T\in \Omega_{Z} \subset  \mathbb{R}^q$ is the input vector,    $\sigma$ is the width of the  Gaussian function, and $\Omega_{Z}$ is the approximation domain of RBFNNs.

The function can be approximated as:   
\begin{equation} \label{three RBFNN}
F(Z)=W^{*T}S(Z)+\epsilon(Z),\ 
  \forall Z\in \Omega_Z,
\end{equation}
where  $F(Z)\in \mathbb{R}^n$, $W^*=[W_1^*, W_2^*, \dots, W_n^*]\in \mathbb{R}^{m \times n}$, and $S(Z)\in \mathbb{R}^{m}$.

	The ideal weights of the RBFNNs are defined as: 
	\begin{equation}
W^*:=\arg \min \limits_{W}\{\sup\limits_{Z\in \Omega_Z} |F-\hat{W}^TS(Z)|\}.
\end{equation}

%

The estimated error of $W$ is:
\begin{equation}
\tilde{W}=W^* - \hat{W}.
\end{equation}

\begin{remark}
In RBFNNs, there are three classes of parameters, including $\mu_j$, $\sigma$, and $\hat{W}$. For linearly parametrized neural networks, the structure information contains the values of  $\mu_j$ and $\sigma$, which need to be determined before the control process. $\hat{W}$ is adapted in the training process.  However, in non-linearly parametrized neural networks,  $\mu_j$, $\sigma$, and $\hat{W}$ are adapted in the training process.   Compared with non-linearly parametrized neural networks, the deterministic structure of linearly parametrized neural networks  brings a higher learning speed.
\end{remark}

\subsection{K-means Algorithm} \label{K-means_Algorithm}
K-means is an unsupervised learning algorithm to solve clustering problems.
 The method aims at minimizing the objective function:
\begin{equation}
J(\mu)=\sum_{j=1}^{m} \sum_{i=1}^{n}\left\|Z_{i}^{(j)}-\mu_{j}\right\|^{2},
\end{equation}
where \( \left\|Z_{i}^{(j)}-\mu_{j}\right\|^{2} \) is a  squared Euclidean distance between a data point \( Z_{i}^{(j)} \) and the corresponding cluster center \( \mu_{j} \).

The K-means algorithm proceeds in the following four steps:
\begin{itemize}
\item[1.] Randomly choose the initial $m$ centres closing to zeros $\mathcal{\mu}=\left\{\mu_{1}, \mu_{2}, \cdots, \mu_{m}\right\}$.
\item[2.] Compute the distance from  each data point $Z_{i}$ to each center $\mu_{j}$.   Choosing the center $\mu_{j}$ which is closest to data point $Z_{i}$ and assign the data point $Z_{i}$ to the cluster which is represented by $Z_{i}^{(j)}$.
\item[3.] Update the center by the means of the data points for each cluster.
\item[4.] Repeat Steps 2 and 3 until $\mu$ convergence.
\end{itemize}


K-means is applicable only when the input data is obtained before the learning process; this limits its implementation on the composite adaptive RBFNN control method because the state error in this control method is difficult to  know beforehand. Nonetheless, the K-means algorithm can be implemented on the adaptive feedforward RBFNN control scheme because the input is the desired state of robotic manipulators known as   a priori.


An example of the distribution of hidden nodes of the RBFNN for one link of a robot manipulator is given in Fig. \ref{hidden_nodes},  where we can see that the hidden nodes are optimally distributed along the reference trajectory. In this example, the reference trajectory is $Z_d = [q_d, \dot{q}_d, \ddot{q}_d]^T = [\sin(t), \cos(t), -\sin(t)]^T$ and the number of hidden nodes of the RBFNN is chosen as 20.

\begin{figure}[!t]
\centering 
\includegraphics[width=3 in]{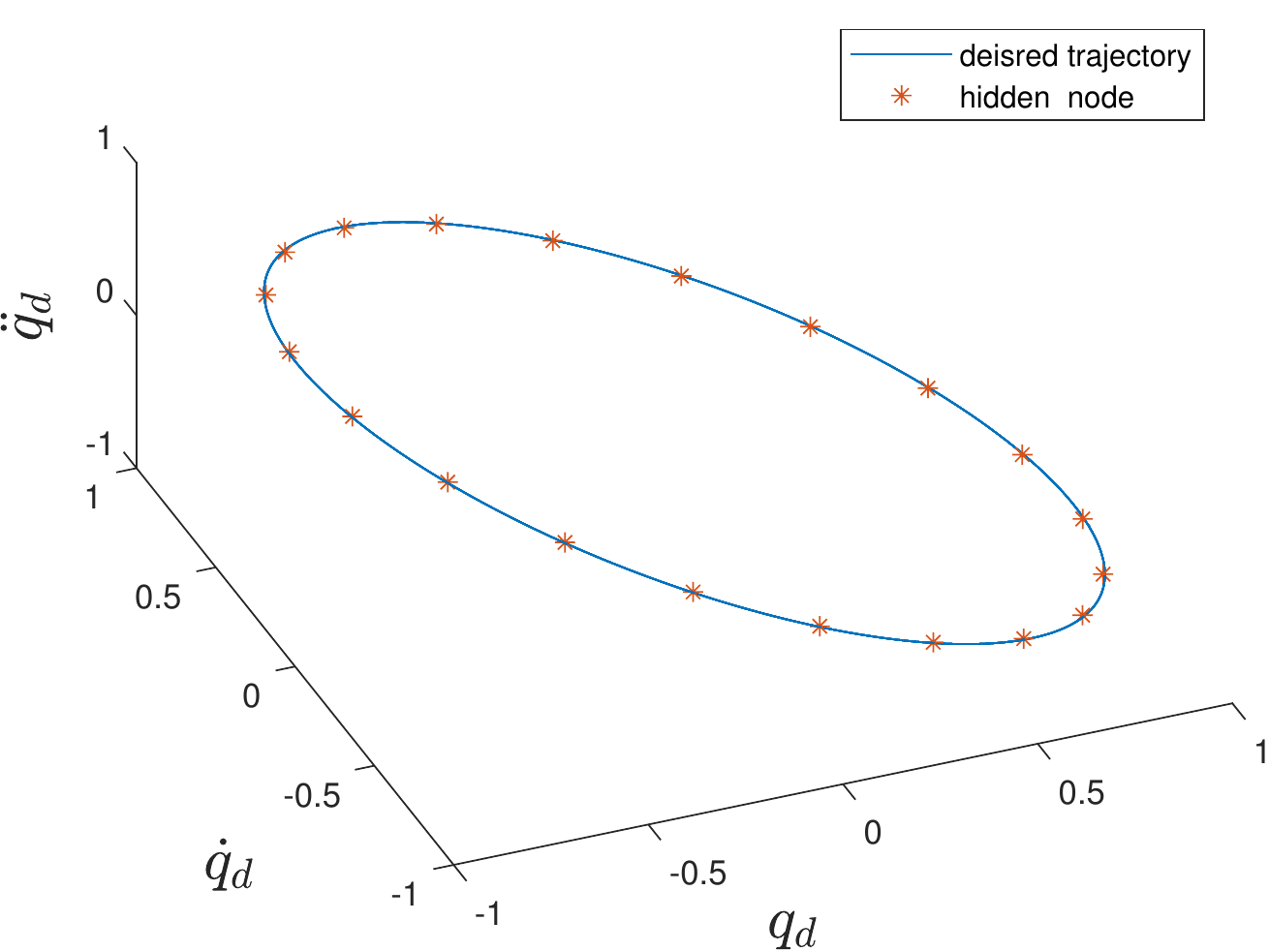}
\caption{The distribution of hidden nodes and the desired trajectory for joint 1.}
\label{hidden_nodes}
\end{figure}

	Compared with the lattice distribution of hidden nodes scheme, the  optimized  distribution of hidden nodes scheme improves the approximation ability of RBFNNs significantly.
	For  the specific approximation error, the  optimized distribution of hidden nodes scheme will sharply decrease the number of the hidden nodes and reduce the complicity of the RBFNN structure. 	
	A simpler RBFNN structure will significantly alleviate the implementation burden in terms of the hardware selection, algorithm realization, and system debugging. 
	One can comprehend the advantages of the optimized distribution of hidden nodes scheme from the following two aspects: 1) all of the hidden nodes are effective for the approximation task; the hidden nodes useless for the approximation task disappear, thus reducing the number of hidden nodes  significantly, especially for RBFNNs with high input dimensions; and 2) each hidden node works under its maximum approximation capability.

\subsection{The PE Condition for Adaptive RBFNN Control with an  Optimized Distribution of Hidden Nodes} \label{PE_4}
	The PE condition is an  essential  property in adaptive control and system identification. It determines whether the adaptive parameters would converge to their optimal or exact values.

\begin{definition}\cite{CongWang2006}: \label{definition_PE}
A piecewise-continuous, uniformly-bounded, vector-valued function \( S:[0, \infty) \rightarrow  \mathbb{R}^{m} \) is said to satisfy the PE condition, if there exist positive constants \( \alpha_{1}, \alpha_{2}, \) and \( T_{0} \) such that:
\[
\alpha_{1} I \geq \int_{t_{0}}^{t_{0}+T_{0}} S(\tau) S(\tau)^{T} d \tau \geq \alpha_{2} I \quad \forall t_{0} \geq 0,
\]
where \( I \in \mathbb{R}^{m \times m} \) is the identity matrix,  $\alpha_{1}$ and $\alpha_{2}$ in (9) are the level
of excitation and the upper bound of excitation.
\end{definition}

According to this definition, the PE condition requires that the integral of the matrix $S(\tau) S(\tau)^{T}$ should be uniformly positive definite throughout the length $T_0$.
 It is noted that if $S(\tau)$ satisfies the PE condition for the time interval $[t_0,t_0+T_0]$, it satisfies the PE condition for any interval $[t_0,t_0+T_1]$ when  $T_1>T_0$ \cite{Kurdila1995}.

\begin{lemma}\label{lemma_PE}
 Consider an arbitrary continuous periodic trajectory \( Z_d(t) \) \( : \mathbb{R}^{+} \mapsto \Omega_{Z_d} \) with period \( t_{p} \). When the hidden nodes of the RBFNN are  optimally distributed along the trajectory $Z_d(t)$, $ Z_d(t)$ can visit each hidden nodes in any period $[t_0,t_0+t_p]$. Then, for an arbitrary time $T_0 \geq 2t_p$, the regressor \( S(Z_d) \) satisfies the PE condition. 
\end{lemma}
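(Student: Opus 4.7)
The plan is to verify both sides of the PE inequality for the regressor $S(Z_d)$, treating the upper bound as routine and focusing on the lower bound. Since each Gaussian activation satisfies $0 < S_j(Z) \leq 1$, the outer product $S(Z_d(\tau)) S(Z_d(\tau))^T$ has entries bounded by $1$, so integration immediately gives an upper bound of the form $\alpha_1 I$ (e.g.\ $\alpha_1 = m T_0$). The substantive work is the uniform lower bound.

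For the lower bound, I would test the integrated matrix against an arbitrary unit vector $v \in \mathbb{R}^m$ and aim to show
\[
v^T \left(\int_{t_0}^{t_0+T_0} S(Z_d(\tau)) S(Z_d(\tau))^T d\tau \right) v = \int_{t_0}^{t_0+T_0} \bigl(v^T S(Z_d(\tau))\bigr)^2 d\tau \geq \alpha_2,
\]
with $\alpha_2 > 0$ independent of $v$ and $t_0$. The hypothesis that $Z_d$ visits each hidden node within any window of length $t_p$, combined with $T_0 \geq 2 t_p$, guarantees times $t_1, \ldots, t_m \in [t_0, t_0+T_0]$ at which $Z_d(t_j) = \mu_j$ (or is arbitrarily close to $\mu_j$, depending on the interpretation of ``visits''). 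At such times $S_j(Z_d(t_j)) = 1$, so the columns of $V := [S(\mu_1), \ldots, S(\mu_m)]$ appear along the trajectory.

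The key algebraic ingredient is that $V$ is a Gaussian interpolation matrix with entries $V_{kj} = \exp(-\|\mu_k - \mu_j\|^2 / \sigma^2)$; by Micchelli's classical theorem, $V$ is symmetric positive definite whenever the centres $\mu_j$ are distinct (which K-means ensures). Hence there exists $c > 0$ depending only on $\{\mu_j\}$ and $\sigma$ such that, for every unit $v$, at least one index $j^\star(v)$ satisfies $|v^T S(\mu_{j^\star})| \geq c$. I would then use continuity of $S \circ Z_d$ to extract a short interval $[t_{j^\star} - \delta, t_{j^\star} + \delta] \subset [t_0, t_0+T_0]$, with $\delta$ uniform in $v$ and $t_0$, on which $|v^T S(Z_d(\tau))| \geq c/2$. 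This yields $\int_{t_0}^{t_0+T_0}(v^T S(Z_d(\tau)))^2 d\tau \geq \delta c^2 / 2$, and taking $\alpha_2 = \delta c^2 / 2$ completes the estimate.

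The main obstacle, I expect, is the uniformity in $v$ and $t_0$: $\delta$ depends on the worst-case modulus of continuity of $S \circ Z_d$ and on how sharply $v^T S(\mu_{j^\star})$ can drop off as $\tau$ moves away from $t_{j^\star}$. Periodicity of $Z_d$ handles the $t_0$-uniformity cleanly, since the integrand is a periodic function of $\tau$ and every window of length $\geq 2 t_p$ contains a full period, so one reduces to a single canonical window. The $v$-uniformity is handled by a compactness argument on the unit sphere together with Micchelli's lower bound on $\lambda_{\min}(V)$, which makes $c$ uniform. Once these pieces are in place, the inequality $\alpha_1 I \geq \int_{t_0}^{t_0+T_0} S S^T d\tau \geq \alpha_2 I$ holds globally in $t_0$, establishing the PE property.
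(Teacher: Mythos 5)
Your proposal is correct and takes essentially the same route as the paper, whose ``proof'' is simply a pointer to Lemma 1 of \cite{lu1998robust}: that argument likewise combines the strict positive definiteness of the Gaussian interpolation matrix at the visited centres (Micchelli/Schoenberg) with uniform continuity of $S(Z_d(\cdot))$ and periodicity to extract a uniform excitation lower bound. In effect, you have written out a self-contained, quantitative version of the argument the paper only cites.
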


 The proof can be given by following that of Lemma 1 in \cite{lu1998robust}.

\begin{lemma}\label{PE_level} \cite{zheng2017relationship}
The PE level increases with the increasing separation distance of hidden nodes of RBFNNs. The approximation error $\epsilon(Z)$ increases with the increasing separation distance and  the decreasing fill distance of hidden nodes of RBFNNs. The fill distance is defined as $
h_{\mu, \Omega_{Z}}:=\sup \limits_{Z \in \Omega_{Z}} \min _{\mu_{j} \in \mu}\left\|Z-\mu_{j}\right\|$. The convergence rate of the deterministic learning increases with the PE level. The convergence accuracy of the deterministic learning increase with both the PE level and the approximation accuracy. There exists a trade-off between the approximation error and the PE level with respect to the separation distance of hidden nodes of RBFNNs.
\end{lemma}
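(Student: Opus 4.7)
The plan is to decompose this omnibus lemma into its constituent claims and address each with a distinct tool: (i) the PE level versus separation distance via a Gram-matrix analysis; (ii) the approximation error via classical scattered-data bounds; (iii) the convergence-rate and convergence-accuracy assertions via a Lyapunov argument for adaptive systems with PE regressors; and (iv) the trade-off by combining (i) and (ii). Because the statement is cited from prior work, my goal is to sketch a self-contained route that could be inlined if needed.

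First I would analyze the Gram matrix $G(T_0) := \int_{t_0}^{t_0+T_0} S(Z_d(\tau)) S(Z_d(\tau))^T d\tau$ appearing in Definition \ref{definition_PE}. Because each $S_j$ is a Gaussian of common width $\sigma$, the pointwise $(i,j)$ entry of $S(Z)S(Z)^T$ is bounded above by $\exp(-\|\mu_i-\mu_j\|^2/(2\sigma^2))$ via the elementary inequality $\|Z-\mu_i\|^2+\|Z-\mu_j\|^2 \geq \tfrac{1}{2}\|\mu_i-\mu_j\|^2$, so the off-diagonal mass of $G(T_0)$ decays exponentially in the pairwise separation distance, while Lemma \ref{lemma_PE} guarantees the diagonal entries stay bounded below once $Z_d$ visits every node. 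A Gershgorin (or Cholesky-perturbation) estimate then shows $\lambda_{\min}(G(T_0))$, which is precisely the PE level $\alpha_2$, is monotone non-decreasing in the separation distance, giving claim (i).

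Next, for the approximation-error statement, I would invoke the standard Madych--Nelson / Wu--Schaback theory of Gaussian RBF interpolation on compact domains: the best-approximation error in sup-norm is bounded by an exponentially small function of the fill distance $h_{\mu,\Omega_Z}$, so decreasing $h_{\mu,\Omega_Z}$ shrinks $\|\epsilon(Z)\|$. Conversely, enlarging the separation distance with a fixed number of nodes forces $h_{\mu,\Omega_Z}$ upward, so $\|\epsilon(Z)\|$ grows, yielding claim (ii); combining (i) and (ii) immediately produces the concluding trade-off, since the separation distance acts with opposite sign on $\alpha_2$ and on $\|\epsilon\|$.

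Finally, for the deterministic-learning claims, I would cast the closed-loop tracking-plus-parameter-error system as a cascade whose homogeneous part is driven by $-\Gamma S(Z_d) S(Z_d)^T$ on the parameter subspace. Under the PE condition the Anderson--Narendra--Annaswamy estimate gives uniform exponential stability with a rate that is a monotone increasing function of $\alpha_2$, establishing claim (iii). Treating $\epsilon(Z)$ as an external input and applying an ISS-type inequality then bounds the ultimate error by $c\,\|\epsilon\|_\infty/\alpha_2$, which is monotone decreasing in both the PE level and the approximation accuracy, giving claim (iv). I expect the main obstacle to be formalizing the monotonicity in (i): the off-diagonal decay in separation distance is immediate, but converting it into a clean monotone dependence of $\lambda_{\min}(G(T_0))$ requires controlling how the set of nodes actually traversed by $Z_d$ changes when the K-means centers are redistributed, and this coupling between the node placement and the reference trajectory is where the argument becomes delicate rather than purely algebraic.
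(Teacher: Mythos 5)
The first thing to note is that the paper offers no proof of Lemma~\ref{PE_level} at all: the result is imported wholesale, with citation, from \cite{zheng2017relationship}, and is used purely as a qualitative guide for choosing the number of hidden nodes. So there is no in-paper argument for your sketch to match; what your sketch does is reconstruct, in outline, the toolkit of the cited reference, and it picks essentially the right tools. There, the PE-level-versus-separation-distance claim rests on Narcowich--Ward-type lower bounds for the minimum eigenvalue of the Gaussian Gram matrix --- your Gershgorin argument with the off-diagonal bound $\exp\left(-\|\mu_i-\mu_j\|^2/(2\sigma^2)\right)$ is the elementary form of this; the approximation-error claim rests on scattered-data bounds in terms of the fill distance; and the rate and accuracy claims rest on the classical PE-implies-exponential-stability results combined with a perturbation argument, exactly your items (i)--(iv). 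In that sense your route is faithful to how the result is actually established in the literature, and it is a legitimate way to inline the lemma.

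Two caveats you should build into the write-up. First, every assertion in the lemma is a monotonicity statement about \emph{bounds} --- the guaranteed excitation level $\alpha_{2}$ of Definition~\ref{definition_PE}, the guaranteed approximation error, the guaranteed convergence rate --- not about the exact quantities. Once the claims are read this way, the ``delicate coupling'' you flag in (i) largely dissolves: you do not need literal monotonicity of $\lambda_{\min}(G(T_0))$ under redistribution of the nodes, only that the diagonal dwell-time terms stay bounded below, which is precisely what the hypothesis of Lemma~\ref{lemma_PE} (that $Z_d$ visits every node in each period) supplies, while the off-diagonal sum decays with the separation distance. Second, your claim (ii) proves that the error \emph{shrinks} as the fill distance shrinks, which is the standard Wu--Schaback/Madych--Nelson direction; the lemma as printed says the error ``increases with \dots\ the decreasing fill distance.'' The printed phrasing is evidently a slip --- it would contradict RBF approximation theory, and the concluding trade-off sentence only makes sense under your reading --- but you are then proving the corrected statement rather than the literal one, and that discrepancy should be stated explicitly rather than passed over silently.
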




\begin{remark}
 Note that the  optimized  distribution of hidden nodes is calculated using the desired state trajectory as the computation input, without considering the features of the outputs of target functions.  
 Thus, the   optimized distribution is only optimal to the inputs of target functions, rather than being entirely optimal to target functions. 
   Since the outputs of target functions are unknown  a priori, it is hard to consider the features of the outputs  in the structural design of RBFNNs.
\end{remark}

	From the view of the  PE condition, the optimized distribution of the hidden node has three advantages: 
	1) for a specific approximation accuracy, compared with the lattice distribution of hidden nodes scheme, the optimized distribution of the hidden nodes scheme needs less amount of hidden nodes, and the separation distance of hidden nodes is larger. The PE level of the optimized distribution of the hidden nodes scheme is higher than the corresponding of the lattice distribution of the hidden nodes scheme;
	 2) during each period, all hidden nodes experience the similar highest activation value along the desired state trajectory in sequence; 
	 3) for each of the hidden nodes, the number of times when it experiences the highest activation value is similar. 
	 The above   intuitive reasons explain why the proposed scheme   consistently has a higher PE level.
	 Correspondingly, in composite adaptive RBFNN control with a lattice distribution of hidden nodes, the hidden nodes satisfying a partial PE condition have the following behaviours:
	1)  only  a part of  the hidden nodes experience the similar highest activation value periodically along the desired state trajectory in sequence. 
	These hidden nodes which never experience the periodically highest activation value make small contributions to the approximation task. Moreover, these hidden nodes may degrade the robustness of the controller; 
	  2) the number of the highest activation of each hidden node  is different in each period because the distribution of hidden nodes is a lattice which is not matching the shape of state trajectories.

\begin{remark}
	 In terms of  how many hidden nodes we should select for the K-means algorithm, by referring to Lemma \ref{PE_level}, for a specific trajectory, we know that less hidden nodes would lead to 	  a larger separation distance and a higher PE level of RBFNNs but result in a bad approximation accuracy.  	A higher PE level will lead to a higher convergence rate of the adaptive controller. In addition, more hidden nodes would increase the computation costs. There is a trade-off between 	 the approximation error, computation costs, and PE level regarding the number of hidden nodes of the RBFNN. In an actual application, the number of hidden nodes should be decided according to the simulation or experiment phenomenons. If all the weights converge to some constants and the tracking errors converge to zeros,  the number of hidden nodes is enough for the closed-loop system.
\end{remark}

\begin{figure}[!t]
\centering 
\includegraphics[width=3 in]{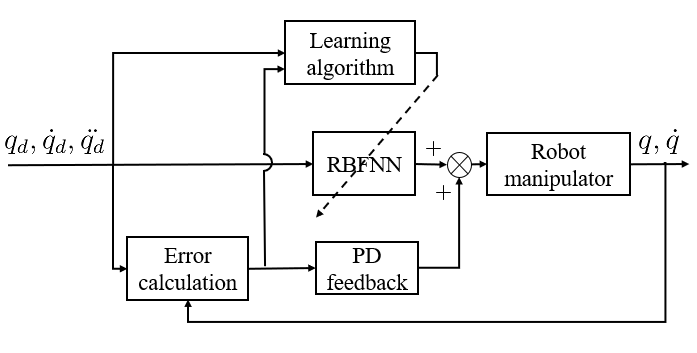}
\caption{The adaptive  feedforward RBFNN control strategy.}
\label{control_diagram}
\end{figure}
    
\section{Control Design}\label{control_design}

 Define the tracking errors as follows:
\begin{equation}
\begin{split}
 e_1&=q_d-q\\
 \dot{e}_1&=\dot{q}_d-\dot{q},
\end{split}
\end{equation}
where $q_d$, $\dot{q}_d$ are the desired joint position and the desired joint velocity.

 A composite tracking error is introduced as:   
  \begin{equation}
  e_2=\dot{e}_1+K_1 e_1,
  \end{equation}
 where $K_1=	  \rm{diag} (K_{11},K_{12},\dots ,K_{1n})$ is 	  a diagonal positive definite matrix.

The composite referenced joint velocity $\dot{q}_r$ and the composite referenced joint acceleration $\ddot{q}_r$ are defined as:
\begin{equation}
\begin{split}
\dot{q}_r&=\dot{q}+e_2=\dot{q}_d+K_1 e_1\\
\ddot{q}_r&=\ddot{q}+\dot{e}_2=\ddot{q}_d+K_1\dot{e}_1.
\end{split}
\end{equation}

The model-based PD-plus-feedforward controller is:
\begin{equation}
\tau = K_{2} e_2 + M(q_d)\ddot{q}_{d}+C(q_d,\dot{q}_d)\dot{q}_{d}+G(q_d),
\label{tau}
\end{equation}
where $K_2=  \rm{diag} (K_{21},K_{22},\dots ,K_{2n})$ is  a diagonal positive definite matrix.     $K_{2} e_2$ can be reshaped as $K_{2} e_2= K_1 K_2 e + K_2 \dot{e}$, and it is a PD term. 

For an adaptive feedforward RBFNN controller, the RBFNN $W^{*T}  S(Z_d)$ is utilized to approximate the feedforward dynamics as follows:
\begin{equation}
\begin{split}
W^{*T}  S(Z_d)+\epsilon(Z_d)=& M(q_d)\ddot{q}_{d}+C(q_d,\dot{q}_d)\dot{q}_{d}\\
&+ G(q_d),
\end{split}
\end{equation}
where $Z_d=[q_d^\text
{T},\dot{q}_{d}^\text
{T},\ddot{
q}_{d}^\text
{T}]^{\text{T}}$ is the input of the adaptive RBFNN, and $\|\epsilon(Z_d)\|^2 \leq \|\bar{\epsilon}\|^2$.

Then, the controller is formulated as:
\begin{equation}
\tau =  K_{2} e_2 +\hat{W}^T  S(Z_d).
\label{tau2}
\end{equation} 
Fig. \ref{control_diagram} illustrates the adaptive feedforward RBFNN control strategy.

A residual error is introduced by replacing the composite dynamics with the feedforward dynamics:
 \begin{equation} \label{residual_robot}
\begin{split}
\tilde{H}_1=& \left( M(q)\ddot{q}_{r}+C(q,\dot{q})\dot{q}_r+G(q) \right)\\
&- \left(  M(q_d)\ddot{q}_{d}+C(q_d,\dot{q}_d)\dot{q}_{d}+ G(q_d)\right). \\
\end{split}
\end{equation}

Another residual error is introduced for the second step of the stability analysis:
\begin{equation} \label{residual_robot_2}
\begin{split}
\tilde{H}_2 = & \left( M(q)\ddot{q}_{r}+C(q,\dot{q})\dot{q}+G(q) \right)\\
&- \left(  M(q_d)\ddot{q}_{d}+C(q_d,\dot{q}_d)\dot{q}_{d}+ G(q_d)\right). \\
\end{split}
\end{equation}

%

In the same manner as Remark 3 of  \cite{Xian2004ACA}, since $M(q)$, $C(q,\dot{q})$ and $G(q)$ are of class $\mathcal{C}^{1}, \forall \boldsymbol{q}, \dot{\boldsymbol{q}} \in \mathbb{R}^{n}$.  Define $E:=[e_{1}^T, e_{2}^T]^T$.    The Mean Value Theorem can be applied on $\tilde{H}_1$, $\tilde{H}_2$ to obtain:
\begin{equation}\label{errorH}
\begin{split}
\left\|\tilde{H}_1\right\| &\leq \rho_1(\|E\|)\|E\|,\\
\left\|\tilde{H}_2\right\| &\leq \rho_2(\|E\|)\|E\|,
\end{split}
\end{equation}
in which $
\rho_1, \rho_2: \mathbb{R}^{+} \mapsto \mathbb{R}^{+}
$
are certain functions that are globally invertible and strictly increasing\cite{Queiroz1997}.

%

The gradient method with  a switching $\delta$-modification is utilized to train the RBFNN:

\begin{equation} \label{adaWS}
\dot{\hat{W}}_i=\Gamma_i \left(S(Z_d) e_{2i}-\delta \hat{W}_i\right),
\end{equation}
where $\delta$ is:
\begin{equation}
\delta=\left\{\begin{array}{ll}
0 & \text { if }\|\hat{W}_i\|<W_{0} \\
\delta_{0} & \text { if }\|\hat{W}_i\| \geq W_{0},
\end{array}\right.
\end{equation}
$\delta_{0}$ is a small positive constant, and the learning rate $\Gamma$ is a positive constant. $W_{0}$ should be selected large
enough so that    $W_{0} > \|W_i^*\|$ considerably.  
Because   $\|W_i^*\|$ is unknown, we can select an obviously large initial value and then reselect it according to the experiment phenomenon. If $ {W}_{0}$ is set less than $\|W_i^*\|$, it has the same properties as the fixed $\delta$-modification.

\begin{remark}
In the fixed $\delta$-modification scheme \cite{weihe2020,CongWang2006}, $\delta$ is a constant and is used for avoiding  the fact that the weights of RBFNNs drift to  infinity. However, it introduces oscillations to the parameter weights of RBFNNs, which would inhibit the learning process.
	The ideal form of $\delta$-modification is $\delta (W_i^*-\hat{W}_i)$ and the term would drive $\hat{W}_i$ to approach $W_i^*$.   
  Since $W_i^*$ cannot be known  a priori,  to avoid that $\hat{W}_i$ drifts to infinity,   a common practice is to set $\delta$-modification as  $\delta (0-\hat{W}_i)$, where $\delta$-modification will drive $\hat{W}_i$ to approach $0$.
	 Switching $\delta$-modification can reduce the oscillation by setting $\delta=0$, when $\|\hat{W}_i\| \leq {W}_{0}$.   
\end{remark}

\begin{theorem}
For  the robotic manipulator \eqref{system}, under Assumption \ref{periodic}, driven by the controller \eqref{tau} with the learning algorithm \eqref{adaWS},  the   tracking errors $e_1$, $e_2$, and the estimated weights error $\tilde{W}$  will exponentially converge to the corresponding small intervals;   
and the intervals can be arbitrarily  diminished by increasing the control gains $K_1$ and $K_2$ and the learning rate $\Gamma$.
\end{theorem}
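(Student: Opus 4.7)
The plan is to carry out a Lyapunov-based argument in the composite coordinates $(e_1, e_2, \tilde{W})$, and then upgrade the standard ultimate boundedness conclusion to exponential convergence using the PE condition for $S(Z_d)$ guaranteed by Lemma \ref{lemma_PE}. First I would derive the closed-loop error dynamics: starting from \eqref{system}, writing $\ddot{q} = \ddot{q}_r - \dot{e}_2$ and $\dot{q} = \dot{q}_r - e_2$, substituting the controller \eqref{tau2}, and using the RBFNN approximation together with $\hat{W} = W^* - \tilde{W}$, I expect to arrive at
\[
M(q)\dot{e}_2 + C(q,\dot{q})\,e_2 = -K_2 e_2 + \tilde{W}^T S(Z_d) + \tilde{H}_1 + \epsilon(Z_d),
\]
with $\tilde{H}_1$ exactly the residual in \eqref{residual_robot}. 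A companion filter equation $\dot{e}_1 = e_2 - K_1 e_1$ follows from the definition of $e_2$.

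Next I would take the Lyapunov candidate
\[
V = \tfrac{1}{2} e_2^T M(q) e_2 + \tfrac{1}{2} e_1^T K_1 K_2 e_1 + \tfrac{1}{2}\sum_{i=1}^n \tilde{W}_i^T \Gamma_i^{-1} \tilde{W}_i,
\]
differentiate it, and apply Property \ref{skew} to cancel $\tfrac{1}{2}e_2^T(\dot{M}-2C)e_2$. The adaptive law \eqref{adaWS} is designed precisely so that the cross term $e_2^T \tilde{W}^T S(Z_d)$ cancels against the corresponding contribution in $\sum_i \tilde{W}_i^T \Gamma_i^{-1}\dot{\tilde W}_i$, leaving
\[
\dot V \;\le\; -e_2^T K_2 e_2 \;-\; e_1^T K_1 K_2 K_1 e_1 \;+\; e_2^T \tilde{H}_1 \;+\; e_2^T \epsilon(Z_d) \;+\; \delta \sum_i \tilde{W}_i^T \hat{W}_i.
\]
Using $\|\tilde{H}_1\|\le \rho_1(\|E\|)\|E\|$ from \eqref{errorH}, $\|\epsilon\|\le \bar\epsilon$, the identity $\tilde W_i^T\hat W_i = \tilde W_i^T W_i^* - \|\tilde W_i\|^2$ (active branch of the switching $\delta$-modification), and Young's inequality, I should be able to absorb the indefinite terms into the quadratic ones and obtain $\dot V \le -\alpha V + c$ on a suitable forward-invariant region, yielding exponential convergence of $V$ to a residual set whose size contracts as $\lambda_{\min}(K_1),\lambda_{\min}(K_2),\lambda_{\min}(\Gamma_i)$ are enlarged.

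The hard part will be two-fold. First, the residual bound $\rho_1(\|E\|)\|E\|$ is only \emph{locally} dominated by $e_2^T K_2 e_2$, so I need a careful invariance/region argument so that the trajectory never leaves the domain on which the gain inequality holds; the switching $\delta$-modification is precisely what keeps $\hat W$ in a ball of radius $W_0$, hence keeps $\tilde H_1$ controlled by a linear function of $\|E\|$. Second, the standard Lyapunov argument above only delivers ultimate boundedness of $\tilde W$; promoting this to \emph{exponential} convergence of $\tilde W$ requires the PE property of $S(Z_d)$ from Lemma \ref{lemma_PE}. Once $e_2$ is shown to lie in a small ball, the weight subsystem becomes
\[
\dot{\tilde W} \;\approx\; -\Gamma\, S(Z_d) S(Z_d)^T \tilde W \;+\; \text{small perturbation},
\]
and I would invoke the classical result that PE implies uniform exponential stability of such LTV systems, with the rate increasing in $\Gamma$. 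This is presumably where the alternative residual $\tilde{H}_2$ in \eqref{residual_robot_2} enters: since $\tilde H_1 = \tilde H_2 + C(q,\dot q)e_2$, substituting $\tilde H_2$ removes the Coriolis–velocity coupling from the equation for $\dot{\tilde W}$ and produces a cleaner perturbation estimate, so that the PE-driven exponential decay of $\tilde W$ can dominate. Exponential convergence of $e_1$ then falls out of the linear filter $\dot e_1 = -K_1 e_1 + e_2$.
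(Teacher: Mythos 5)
Your overall route is the paper's route: step one is the same Lyapunov argument (a quadratic form in $(e_1,e_2,\tilde{W})$, the bound $\|\tilde H_1\|\leq\rho_1(\|E\|)\|E\|$ from \eqref{errorH}, and the switching $\delta$-modification inequality) yielding semi-global boundedness on a region that grows with the gains; step two then uses PE of $S(Z_d)$ plus a perturbation analysis, with $\tilde H_2=\tilde H_1-C(q,\dot q)e_2$ entering exactly where you predicted. The differences in your step one (the $K_1K_2$ weighting on $e_1$, and the cross term $e_1^TK_1K_2e_2$ missing from your displayed bound) are cosmetic and repairable by Young's inequality, as you indicate.

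However, your step two contains a genuine gap. You claim that once $e_2$ lies in a small ball, the weight dynamics become $\dot{\tilde W}\approx-\Gamma S(Z_d)S(Z_d)^T\tilde W$ plus a small perturbation, and that the classical PE result for such LTV systems finishes the proof. This reduction fails as stated: the adaptive law \eqref{adaWS} gives $\dot{\tilde W}_i=-\Gamma\bigl(S(Z_d)e_{2i}-\delta\hat W_i\bigr)$, so smallness of $e_2$ only implies that $\tilde W$ drifts slowly; it produces no contraction in $\tilde W$ at all. The gradient form $-\Gamma SS^T\tilde W$ would require the algebraic output relation $e_{2i}=S(Z_d)^T\tilde W_i$, as in a static regression/identification scheme, whereas here $e_2$ is tied to $\tilde W^TS(Z_d)$ only dynamically, through $M(q)\dot e_2=-K_2e_2+\tilde W^TS(Z_d)+\tilde H_2+\epsilon$. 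Making your approximation rigorous would require a genuine two-time-scale (singular perturbation) argument with additional gain conditions that you have not supplied. The paper avoids the decoupling entirely: it stacks the full closed loop as the perturbed LTV system \eqref{System_perturbed}, invokes the classical adaptive-control result (Farrell, Sec.~VII) that PE of $S(Z_d)$ renders the \emph{coupled} nominal system \eqref{nominal_system} in $(E,\bar{\tilde W})$ globally exponentially stable, and then applies Khalil's Lemma 9.1 to the vanishing perturbation $M^{-1}\tilde H_2$ (semi-global exponential stability for suitably large $K_1$, $K_2$, $\Gamma$) and Lemma 9.2 to the non-vanishing perturbation $M^{-1}\epsilon$ together with the switching term $\delta\Gamma\bar{\hat W}$. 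So the PE theorem you intend to cite must be applied to the coupled error--weight system, not to a heuristically isolated weight equation; with that correction, your outline coincides with the paper's proof.
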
 
 \begin{proof}
1) Proving the boundedness of tracking errors $e_1$,  $e_2$ and the estimated weights error $\tilde{W}_i$. 
 
Consider the following Lyapunov function:
\begin{equation}
V=\frac{1}{2}e_1^Te_1 + \frac{1}{2}e_2^T M e_2+ \frac{1}{2}\sum_{i=1}^n\tilde{W}_i^T \Gamma^{-1} \tilde{W}_i.
\end{equation}

The derivative of $V$ is:
\begin{equation} \label{dv21}
\begin{split}
\dot{V} =& -e_1^T K_1 e_1 +e_2^Te_1 \\
&+ e_2^T M \dot{e}_2 +\frac{1}{2} e_2^T \dot{M} e_2 + \sum_{i=1}^n \tilde{W}_{i}^T \Gamma^{-1} \dot{\tilde{W}}_{i}\\
=&-e_1^T K_1 e_1 +e_2^Te_1 \\
&+ e_2^T \left(M \dot{e}_2 +C e_2\right) + \sum_{i=1}^n \tilde{W}_{i}^T \Gamma^{-1} \dot{\tilde{W}}_{i}\\
   \end{split}
\end{equation}
where $\tilde{W}_{i}=W_{i}^*-\hat{W}_{i}\label{ew}$.

Let us recast the error function $M \dot{e}_2 +C e_2$.  The RBFNN can be reformulated as:
\begin{equation}\label{WSXF2}
\begin{split}
\hat{W} ^T  S (Z_d)=&M(q)\ddot{q}_r+C(q,\dot{q})\dot{q}_r+G(q)\\
&-\tilde{H}_1 -\epsilon(Z)-\tilde{W}^T  S(Z_d).
\end{split}
\end{equation}

Applying the aforementioned result \eqref{WSXF2} into controller \eqref{tau2}, we have:
\begin{equation}\label{tau22}
\begin{split}
\tau  = & K_2 e_2 + M(q)\ddot{q}_r+C(q,\dot{q})\dot{q}_r+G(q)\\
&-\epsilon(Z)-\tilde{W}^T S(Z_d)-\tilde{H}_1.
\end{split}
\end{equation}

Substituting controller \eqref{tau22} into closed-loop system \eqref{system}, the  error equation of the closed-loop system can be obtained:
\begin{equation}\label{errd2}
\begin{split}
M(q) \dot{e}_2 + C(q,\dot{q}) e_2 =&-K_2 e_2 +\tilde{H}_1+\epsilon(Z)\\
&+\tilde{W}^T  S(Z_d).
\end{split}
\end{equation}

Substituting the error equation \eqref{errd2} into  \eqref{dv21} yields:
\begin{equation}\label{dv41}
\begin{split}
\dot{V}=&-e_1^T K_1 e_1 +e_2^Te \\
&+e_2^T \big(-K_2 e_2+\tilde{H}_1+\epsilon(Z)+\tilde{W}^T  S(Z) \big) \\
&-\sum_{i=1} ^n \tilde{W}_{i}^T \Gamma^{-1} \dot{\hat{W}}_{i}\\
\leq&    - \lambda_{\min}(K_1-\frac{1}{2}) \|e_1\|^2 - \lambda_{\min} (K_2-1) \|e_2\|^2 \\
&+e_2^T \tilde{H}_1  + \frac{1}{2}\bar{\epsilon}^2 + \sum_{i=1}^n \tilde{W}_{i}^T ( S(Z)e_{2i} -\Gamma^{-1} \dot{\hat{W}}_{i}).\\
\end{split}
\end{equation}

 Since $\|e_{2}\| \leq \|E\|$, 
according to  $\tilde{H}_1$ in  \eqref{errorH}, we have:
\begin{equation}\label{inequality_e}
 e_{2}^{T} \tilde{H}_1 \leq\|e_{2}\| \|E\| \rho_1(\|E\|) \leq \rho_1(\|E\|) \|E\|^{2}. 
\end{equation}

 Substituting the inequality \eqref{inequality_e} and the learning algorithm \eqref{adaWS} into \eqref{dv41}, we have:

\begin{equation}\label{dv51}
\begin{split}
\dot{V}
\leq& -\lambda_{\min}(K_1-\frac{1}{2})\|e_1\|^2-\lambda_{\min}(K_2-1)\|e_2\|^2 \\
&+\|e\|^{2} \rho_1(\|E\|)+
\frac{1}{2}\bar{\epsilon}^2  +\sum_{i=1}^n \delta \tilde{W}_{i}^T \hat{W}_{i}\\
\leq & -\left( K_{s}-\rho_1(\|E\|)\right) \|E\|^2  +\frac{1}{2}\bar{\epsilon}^2  +\sum_{i=1}^n \delta \tilde{W}_{i}^T \hat{W}_{i},\\
\end{split}
\end{equation}
with $K_{s}=\lambda_{\min}(\lambda_{\min}(K_1-\frac{1}{2}), \lambda_{\min}(K_2-1))$. 

	For the term $\delta \tilde{W}_{i}^T \hat{W}_{i}$ in \eqref{dv51}, when $\|\hat{W}_{i}\|\leq W_{0}$, $ \delta=0$, we have:
\begin{equation}\label{inequality_W1}
\begin{split}
\delta \tilde{W}_{i}^T \hat{W}_{i} &=0\\
&\leq \frac{1}{2}\delta_{0} (  W_{0}^2 - \|\tilde{W}_{i}\|^2).
\end{split}
\end{equation}	

 When $\|\hat{W}_{i}\| > W_{0}$, $\delta=\delta_{0}$,  we have:
 \begin{equation}\label{inequality_W2}
\begin{split}
 \delta \tilde{W}_{i}^T \hat{W}_{i} &=  \delta_{0} \tilde{W}_{i}^T (W^*- \tilde{W}_{i} )\\
 &=   -\delta_{0} \tilde{W}_{i}^T  \tilde{W}_{i} + \delta_{0} \tilde{W}_{i}^T W^* \\
 &\leq -\frac{\delta_{0}}{2}  \tilde{W}_{i}^T  \tilde{W}_{i}  + \frac{\delta_{0}}{2}  \|W^*\|^2\\
 &\leq -\frac{\delta_{0}}{2}  \| \tilde{W}_{i} \|^2  +\frac{\delta_{0}}{2} W_{0}^2.
\end{split}
\end{equation}	
	

Combining  \eqref{inequality_W1} and \eqref{inequality_W2}, we have:
\begin{equation}\label{inequality_W}
\delta \tilde{W}_{i}^T \hat{W}_{i} \leq \frac{1}{2}\delta_{0} (  W_{0}^2 - \|\tilde{W}_{i}\|^2).
\end{equation}

Substituting \eqref{inequality_W} into \eqref{dv51}, we have:
\begin{equation}\label{dv5}
\begin{split}
\dot{V}
\leq & -\left( K_{s}-\rho_1(\|E\|)\right) \||E||^2   -\sum_{i=1}^n\frac{\delta_{0}}{2}  \| \tilde{W}_{i} \|^2 \\
& + \frac{1}{2}\bar{\epsilon}^2  +  \frac{1}{2} n \delta_{0} W_{0}^2.\\
\end{split}
\end{equation}

The domain of $\|E\|$ for $K_{s}-\rho_1(\|E\|) >0$ is estimated by:
\begin{equation}
\Omega_{er}:= \{E\ | \ ||E||<  \rho_1^{-1} (K_s)\},
\end{equation}
where $\rho_1^{-1}(\cdot)$ is the inverse function of $\rho_1(\cdot)$.

$\dot{V}$ can be formulated as the following  format:
\begin{equation} \label{SGUUB}
\dot{V} \leq - c_1 V + c_2 \ \text{for} \  ||E|| < \rho_1^{-1} (K_s),
\end{equation}
where $c_1=\min \big(\frac{2 \lambda_{\min}( K_{s}-\rho(\|E\|)}{ \lambda_{\max}(1, M)}, \frac{\delta_{0}}{\Gamma^{-1}} \big) $ and $c_2= \frac{1}{2}\bar{\epsilon}^2  +\frac{1}{2} n \delta_{0}W_{0}^2$.
 

Integrating \eqref{SGUUB}, we have:
\begin{equation}
V\leq (V(0)-\frac{c_2}{c_1}) \exp^{-c_1 t} +\frac{c_2}{c_1} \leq V(0)+\frac{c_2}{c_1}. 
\end{equation}

Define $D:=2(V(0)+\frac{c_2}{c_1})$,    tracking errors $\|e_1\|$,  $\|e_2\|$ and  the sum of estimated weights error $\sum_{i=1}^n\tilde{W}_i^T \Gamma^{-1}  \tilde{W}_i$ are bounded by: 

\begin{equation}
\begin{split}
\|e_1\| &\leq \sqrt{D}\\
\|e_2\| &\leq \sqrt{\frac{D}{\lambda_{\min}(M)} }\\
\sum_i^n\tilde{W}_{i=1}^T \Gamma^{-1}  \tilde{W}_i &\leq \sqrt{D}.
\end{split}
\end{equation}

To guarantee the stability of the closed-loop system, $\Omega_{er}$ can be arbitrarily enlarged by increasing control gains $K_1$ and $K_2$ to contain both boundaries of $\|e_1\|$ and $\|e_2\|$.

2) Prove that tracking errors $e_1$, $e_2$, and the estimated weights error $\tilde{W}$ exponentially converge to the corresponding residual intervals under the condition of $S(Z_d)$ satisfying the PE condition.

Substituting the controller $\tau$ \eqref{tau2} and the residual error $\tilde{H}_2$ \eqref{residual_robot_2} into closed-loop system \eqref{system},  the error  dynamics of the closed-loop system can be obtained:
\begin{equation}
M(q) \dot{e}_2  = -K_2 e_2 +\tilde{H}_2+\epsilon(Z)+\tilde{W}^T  S (Z_d).
\end{equation}

The entire closed-loop system can be expressed as: 
%

\begin{align}  
\left[\begin{array}{c}
\dot{ e}_1\\
\dot{e}_2 \\
\bar{\dot{\tilde{W}}}
\end{array}\right]=&\left[\begin{array}{ccc}
-K_1 & 1 & \mathbf{0} \\
 \mathbf{0}    &   -M^{-1}K_2 & M^{-1}\bar{S}(Z_d)^T\\
 \mathbf{0} & -\Gamma \bar{S}(Z_d)  & \mathbf{0} 
\end{array}\right]   \left[\begin{array}{c}
e_1 \\
e_2\\
\bar{\tilde{W}}
\end{array}\right]\nonumber\\
&+\left[\begin{array}{c}
 \mathbf{0}\\
M^{-1}(\tilde{H}_2 + \epsilon)\\
\delta \Gamma \bar{\hat{W}}
\end{array}\right], 
\end{align}
where 
$\bar{S}(Z_d)= \rm{diag}(S(Z_d), S(Z_d), \dots, S(Z_d))\in \mathbb{R}^{mn \times n}$, $\bar{\tilde{W}}= [\tilde{W}_1; \tilde{W}_2; \dots; \tilde{W}_n]\in \mathbb{R}^{mn \times 1}$, and $\bar{\hat{W}}= [\hat{W}_1; \hat{W}_2; \dots; \hat{W}_n]\in \mathbb{R}^{mn \times 1}$.

To simplify  the analysis, the system is recast as:

\begin{equation}\label{System_perturbed}
\begin{split}
\left[\begin{array}{c}
\dot{E}\\
\bar{\dot{\tilde{W}}}
\end{array}\right]=&\left[\begin{array}{ccc}
-A & b M^{-1} \bar{S}(Z_d)^T\\
 -\Gamma \bar{S}(Z_d)b^T  & \mathbf{0}
\end{array}\right]   \left[\begin{array}{c}
E\\
\bar{\tilde{W}}
\end{array}\right]\\
&+\left[\begin{array}{c}
bM^{-1}(\tilde{H}_2 + \epsilon) \\
\delta \Gamma \bar{\hat{W}}
\end{array}\right],\\
\end{split}
\end{equation}
 where 
 \begin{equation}
 A= \left[\begin{array}{cc}
K_1 & -1 \\
0     &   M^{-1}K_2
\end{array}\right],
 \end{equation}
$b=[ \mathbf{0}, \mathbf{1}]^T$. $A$ is a positive definite matrix, and  $(A,b)$ is controllable.  The  system \eqref{System_perturbed} is a perturbed system, in which $M^{-1}\tilde{H}_2$ is a vanishing perturbation, and $M^{-1}\epsilon$ is a  non-vanishing perturbation.
  According to the above analysis, we know  \( \hat{W}_i \) is  bounded, and $\delta=0$ when $\|\hat{W}_i\|\leq W_{0}$.   $\delta \Gamma \bar{\tilde{W}}$ is a switch perturbation and the value switches to zero when $\|\hat{W}_i\|\leq W_{0}$.

 The nominal system of \eqref{System_perturbed} is  
 \begin{equation}\label{nominal_system}
\begin{split}
\left[\begin{array}{c}
\dot{E}\\
\bar{\dot{\tilde{W}}}
\end{array}\right]=&\left[\begin{array}{ccc}
-A & b M^{-1}\bar{S}(Z_d)^T\\
 -\Gamma \bar{S}(Z_d)b^T  & \mathbf{0}
\end{array}\right]   \left[\begin{array}{c}
E\\
\bar{\tilde{W}}
\end{array}\right]\\
\end{split}
\end{equation} 

 According to the analysis in Section $\text { VII }$ of \cite{FarrellA1998}, when $S(Z_d)$ is PE, the nominal system \eqref{nominal_system} is global exponentially stable of $(E, \bar{\tilde{W}})=0$.

Consider a perturbed system with the vanishing perturbation $M^{-1}\tilde{H}_2$ as follows:
\begin{equation}\label{System_perturbed_1}
\begin{split}
\left[\begin{array}{c}
\dot{E}\\
\bar{\dot{\tilde{W}}}
\end{array}\right]=&\left[\begin{array}{ccc}
-A & b M^{-1}\bar{S}(Z_d)^T \\
 -\Gamma \bar{S}(Z_d) b^T & \mathbf{0}
\end{array}\right]   \left[\begin{array}{c}
E\\
\bar{\tilde{W}}
\end{array}\right]\\
&+\left[\begin{array}{c}
b M^{-1}\tilde{H}_2 \\
\textbf{0}
\end{array}\right],\\
\end{split}
\end{equation}   

 For the perturbed system with a vanishing perturbation, $\|\tilde{H}_2\| \leq \lambda_h \|E\|$, $\forall E \in \Omega_{ce}$.
According to Lemma 9.1 in \cite{khalil2002nonlinear}, there exists suitably large parameters $K_1$, $K_2$ and $\Gamma$  for the system to  exponentially converge, and the stability is semi-global. 


	Consider the original system \eqref{System_perturbed} constituted by the perturbed system \eqref{System_perturbed_1}, a non-vanishing perturbation $M^{-1}\epsilon$, and a switching perturbation $\delta \Gamma \bar{\hat{W}}$.
The approximation error of the neural network $\epsilon$ is bounded by $\bar{\epsilon}$, which can be arbitrarily small by   designing the neural network.
When $\|\hat{W}_i\|\geq W_{0}$, $\delta= \delta_0$. $\delta_0 \Gamma \bar{\hat{W}}$ can be  rather small by choosing $\delta_0$ small enough.
Since \eqref{System_perturbed_1} is semiglobal exponential stable, according to Lemma 9.2 in \cite{khalil2002nonlinear}, both $E$ and $\bar{\tilde{W}}$  converge to small neighbourhoods of zeros with the size of the neighbourhoods being determined by $\bar{\epsilon}$ and $\delta_0 \Gamma \bar{\hat{W}}$.
When $\tilde{W}_i$ converge to small neighbourhoods, $\|\hat{W}_i\|\leq W_{0}$, $\delta= 0$. Then  both $E$ and $\bar{\tilde{W}}$ converge to small neighbourhoods of zeros with the size of the neighbourhoods being determined only by $\bar{\epsilon}$. 
	The small neighbourhoods can be arbitrarily diminished to zeros under fine-designed neural networks. Further, the intervals can be arbitrarily diminished by increasing the parameters $K_1$, $K_2$ and the learning rate $\Gamma$. 
\end{proof}

\begin{figure*}
\centering
\subfigure[]{\includegraphics[width=3.3in]{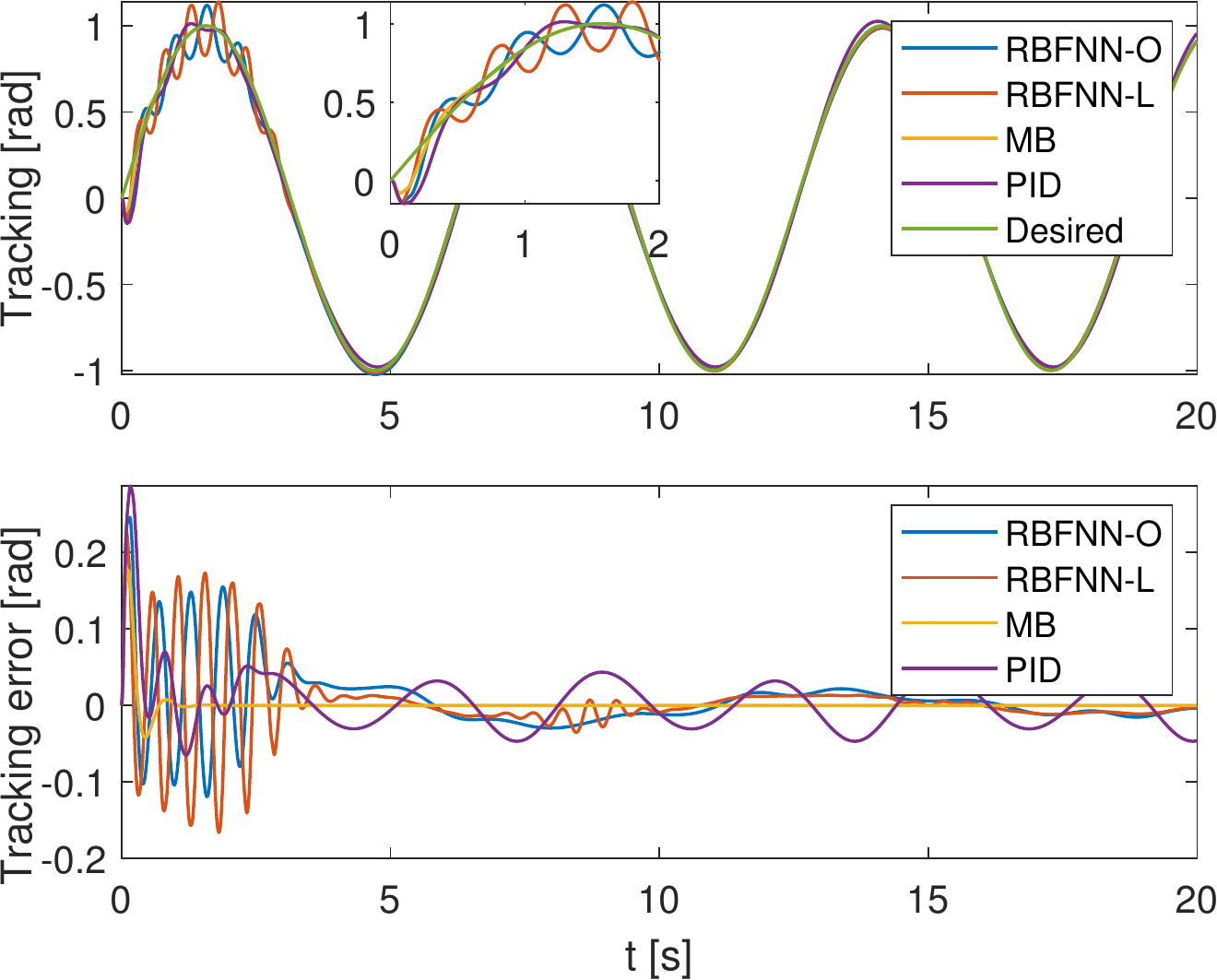}}
\subfigure[]{\includegraphics[width=3.3in]{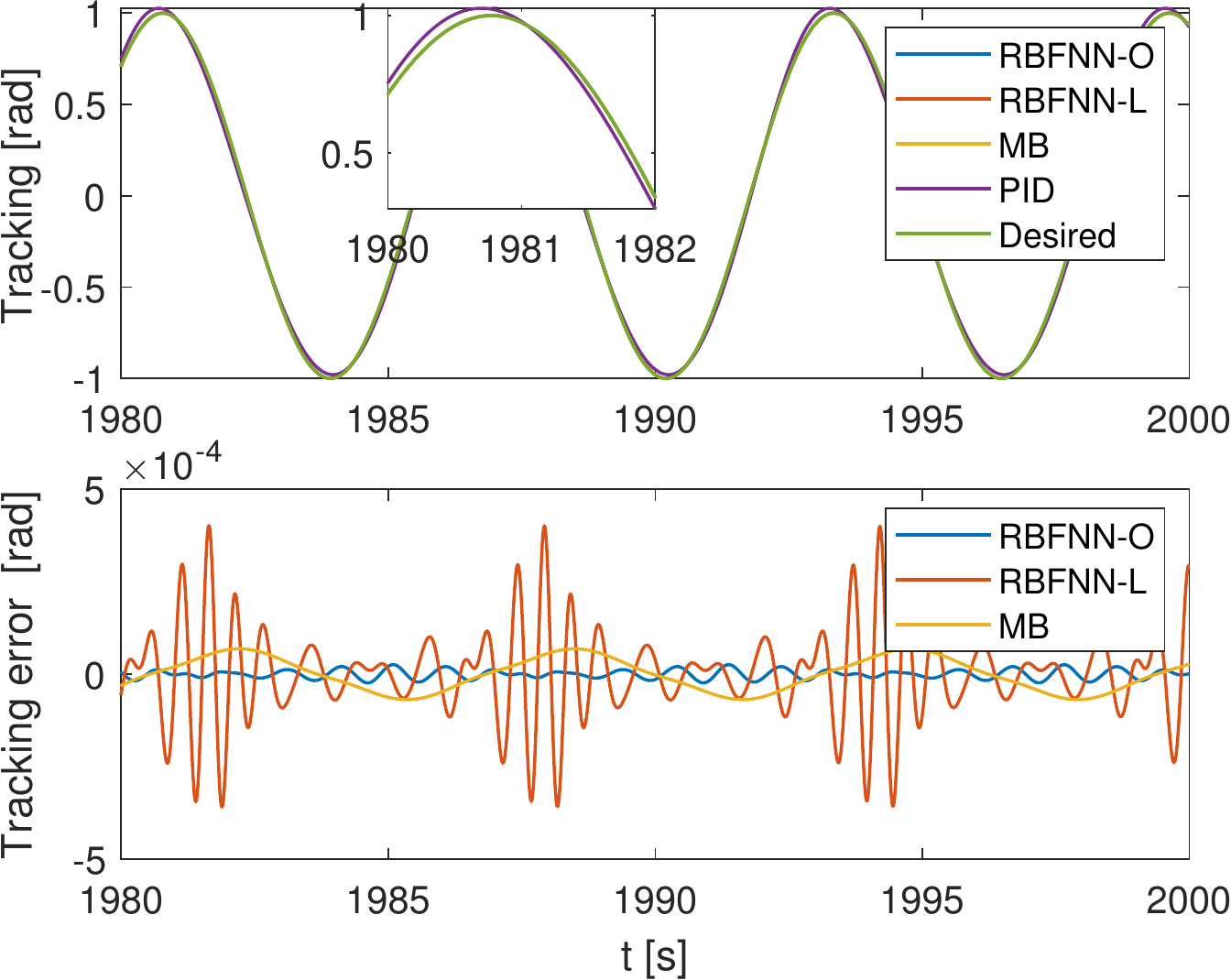}}
\caption{ The tracking performance of Link $1$ by four  controllers.}
\label{Tracking_performance_1}
\end{figure*}

\begin{figure*}
\centering
\subfigure[]{\includegraphics[width=3.3in]{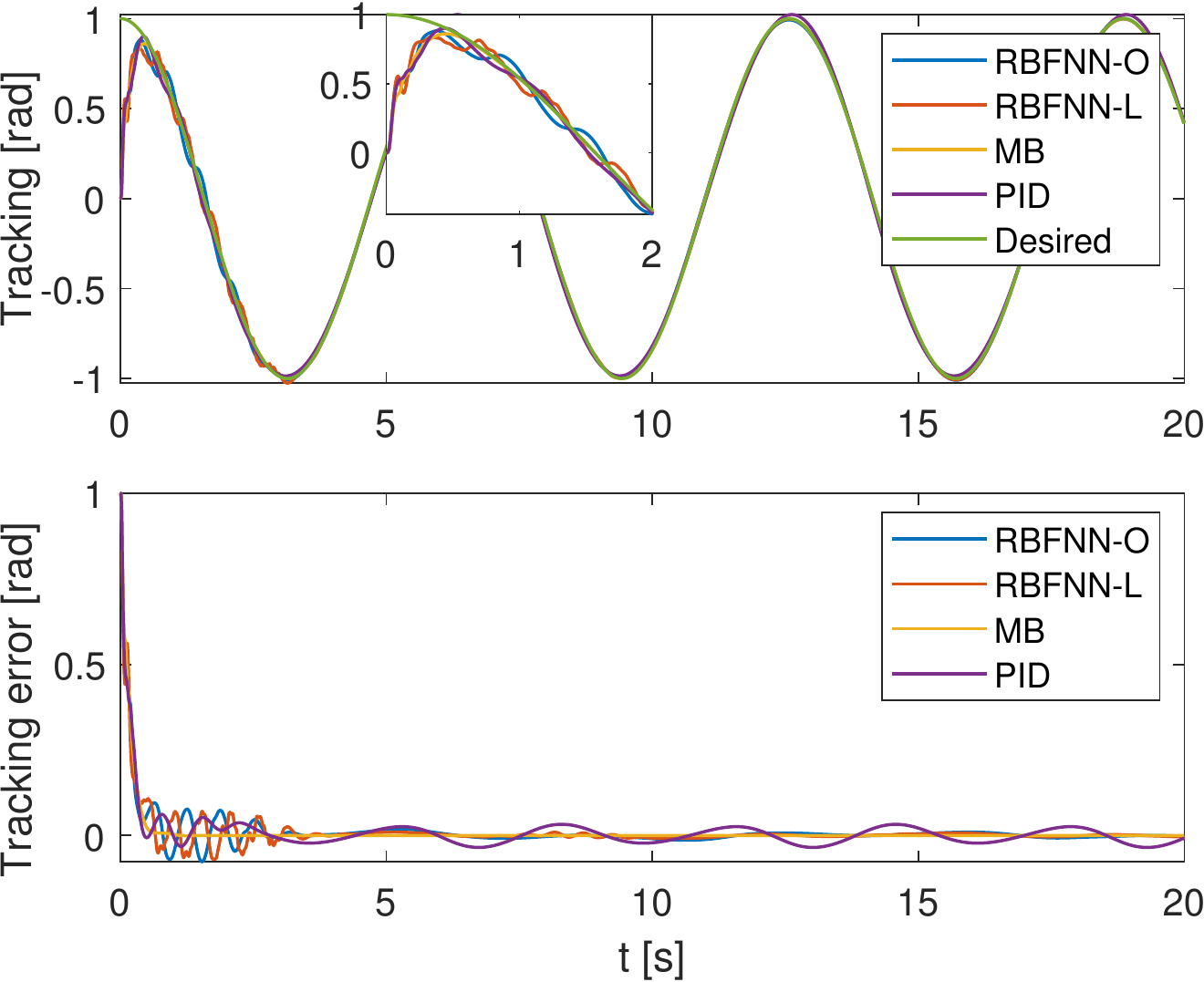}}
\subfigure[]{\includegraphics[width=3.3in]{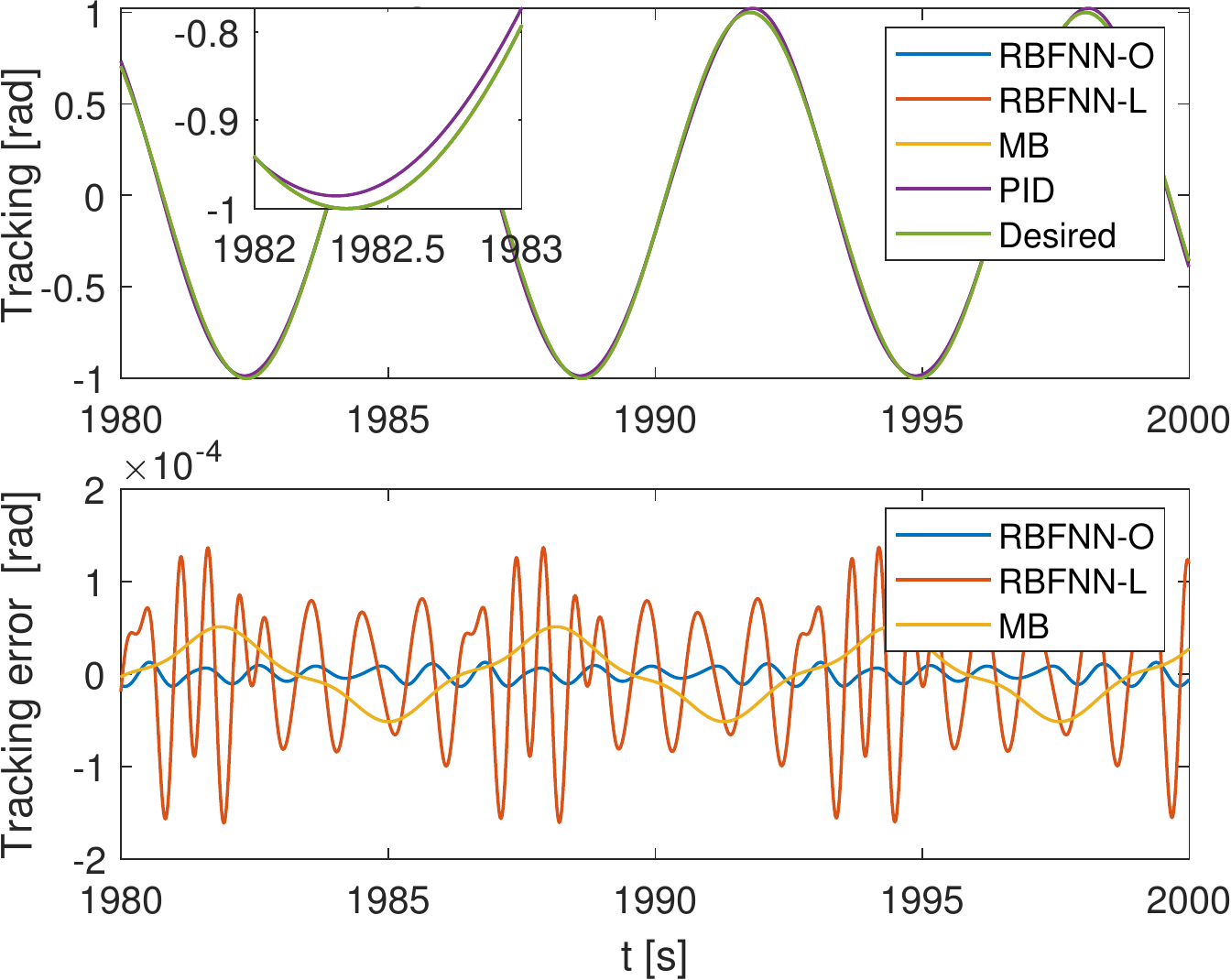}}
\caption{ The tracking performance of Link $2$ by four  controllers.}
\label{Tracking_performance_2}
\end{figure*}

\begin{figure*}
\centering
\subfigure[]{\includegraphics[width=3.3in]{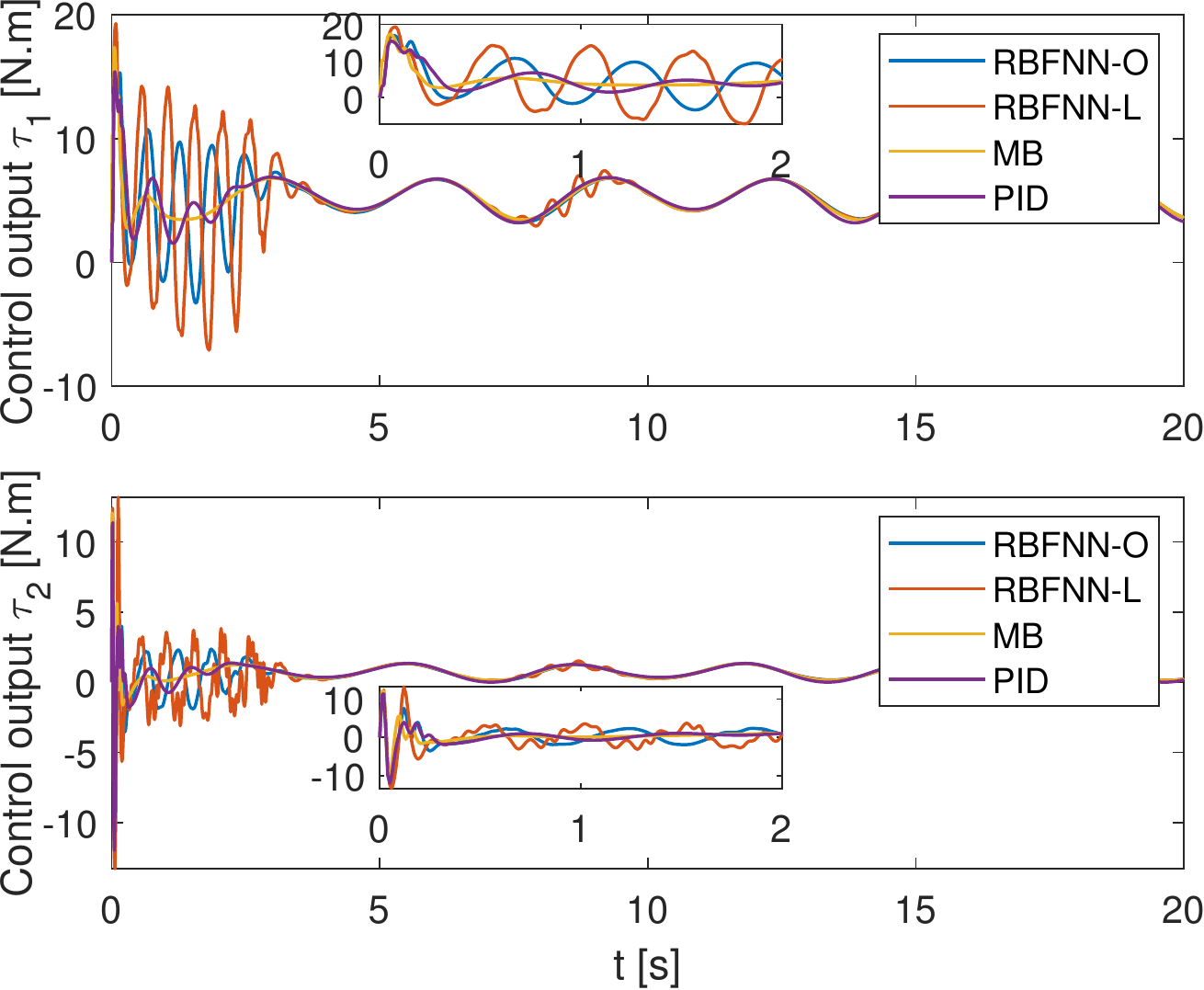}}
\subfigure[]{\includegraphics[width=3.3in]{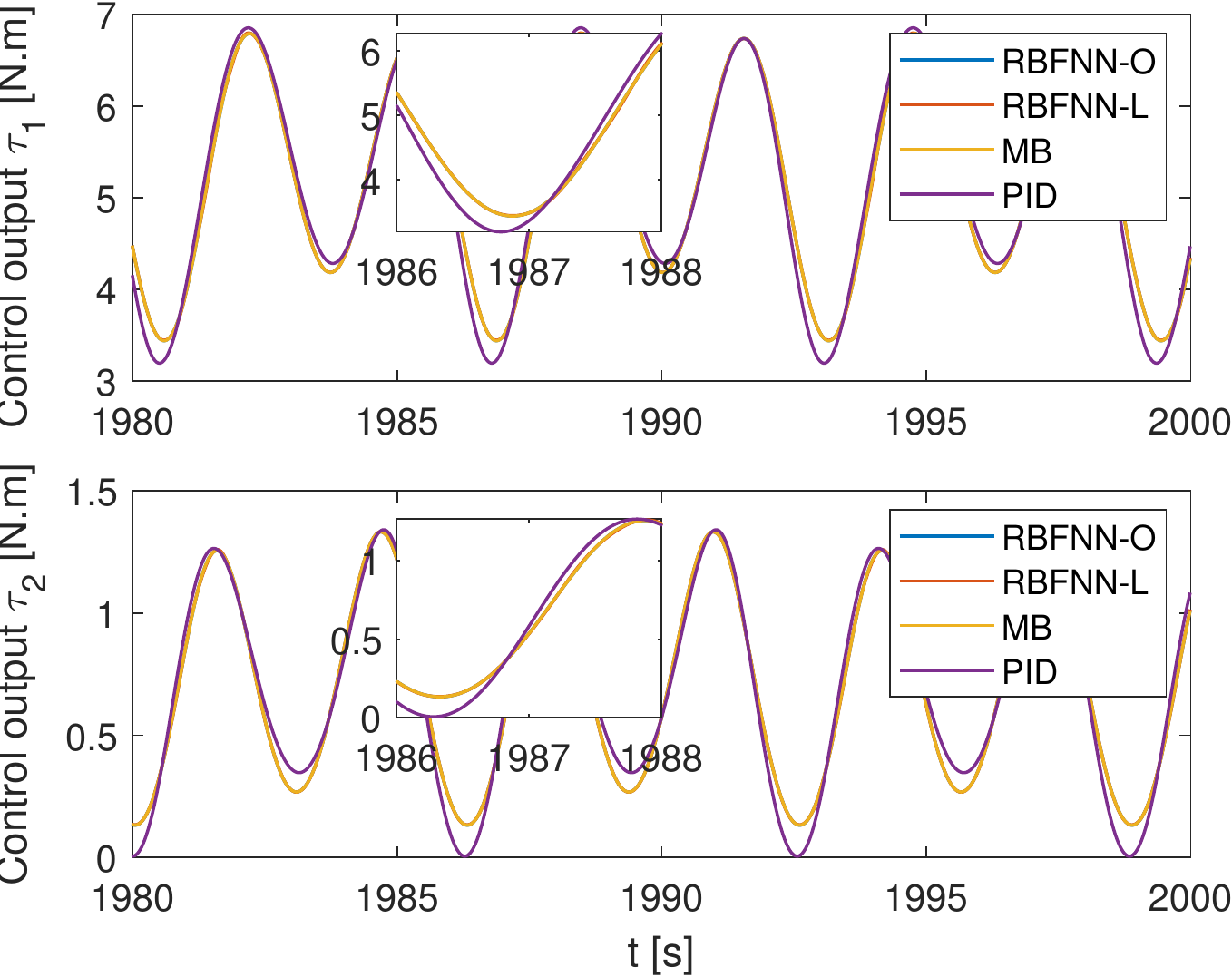}}
\caption{ The control outputs by four controllers.}
\label{control outputs}
\end{figure*}

\begin{figure*}
\centering
\subfigure[]{\includegraphics[width=3.3in]{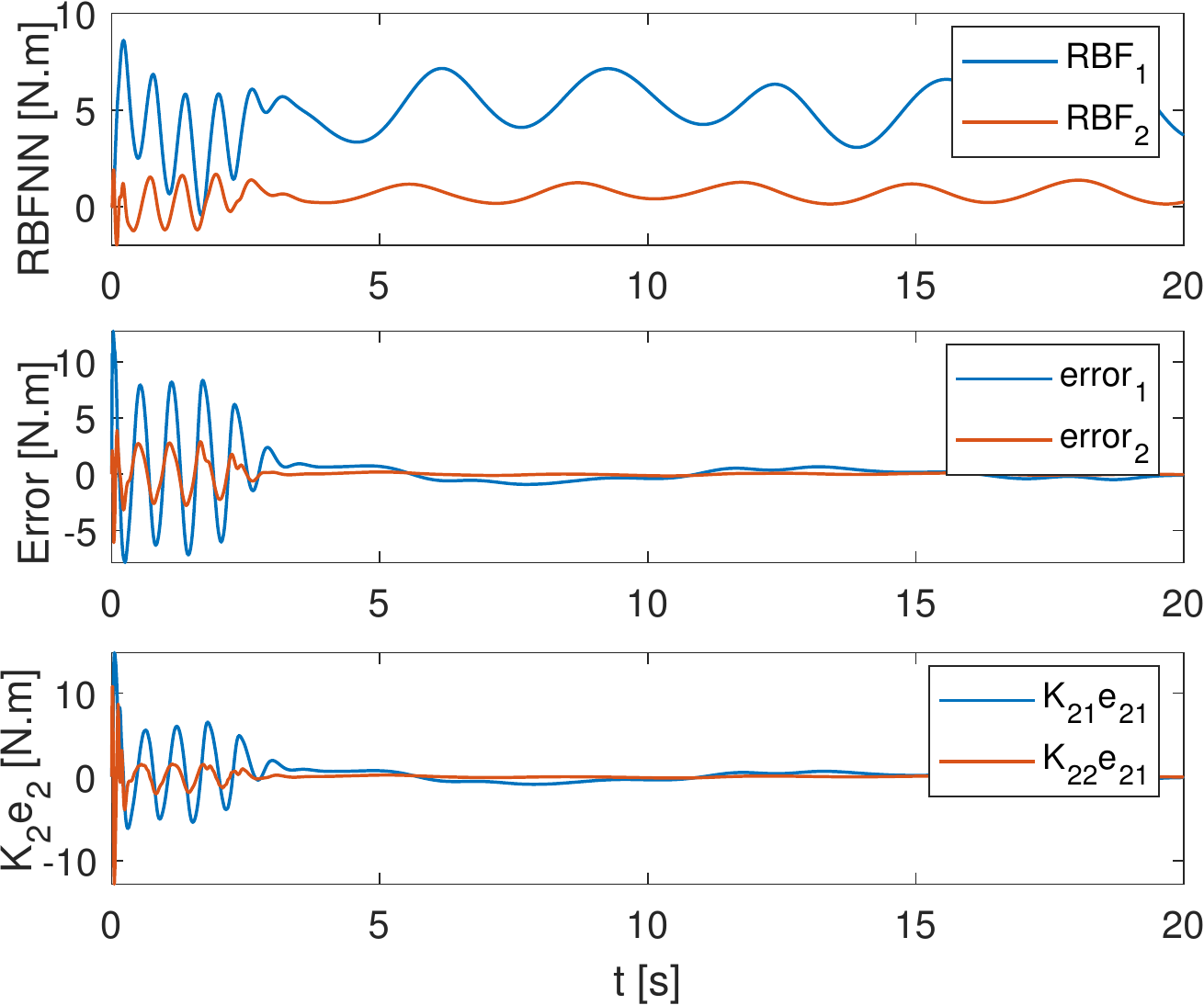}}
\subfigure[]{\includegraphics[width=3.3in]{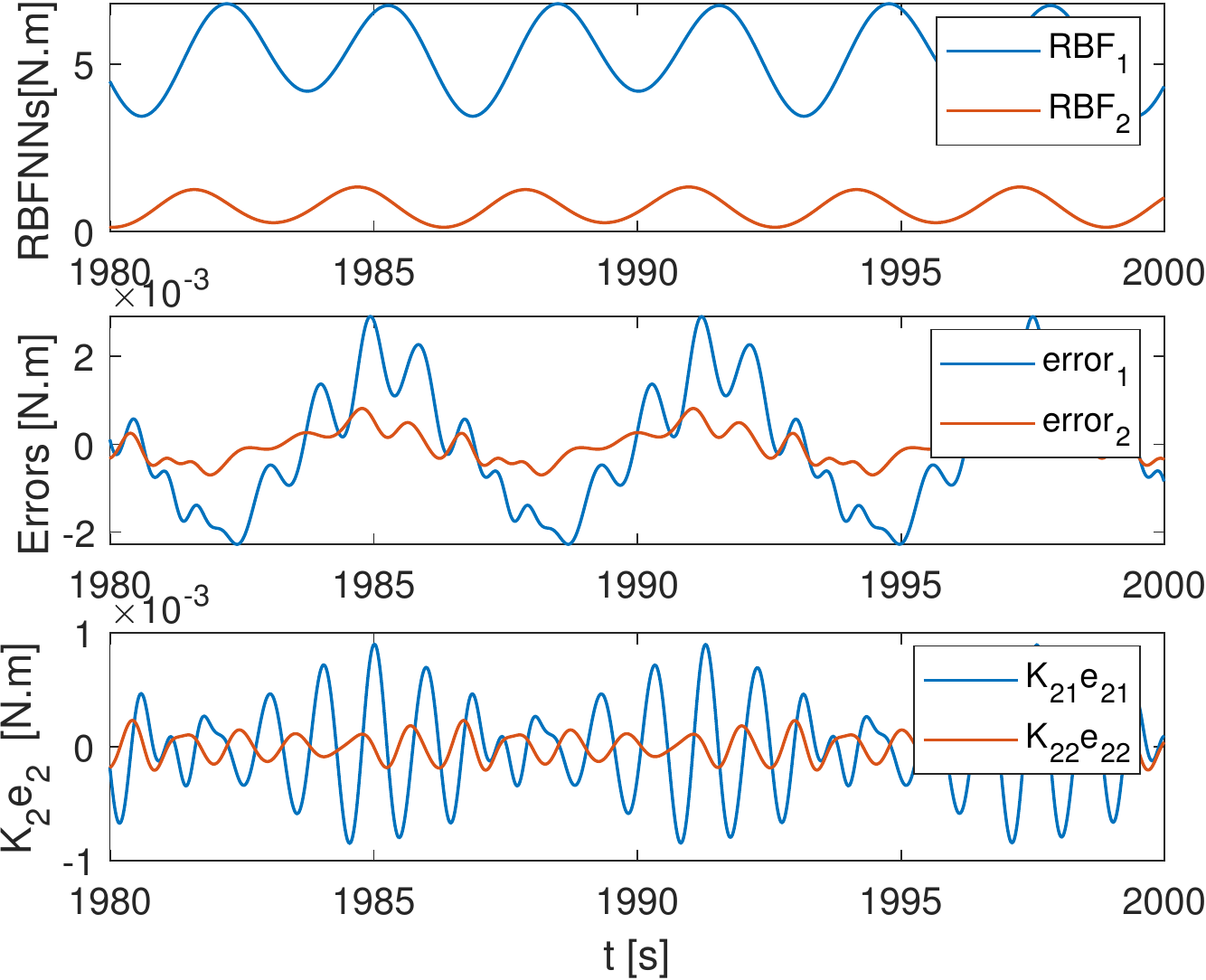}}
\caption{The approximation performance by the  RBFNN-O controller.}
\label{Approximation_performance}
\end{figure*}

\begin{figure*}
\centering
\subfigure[]{\includegraphics[width=3.3in]{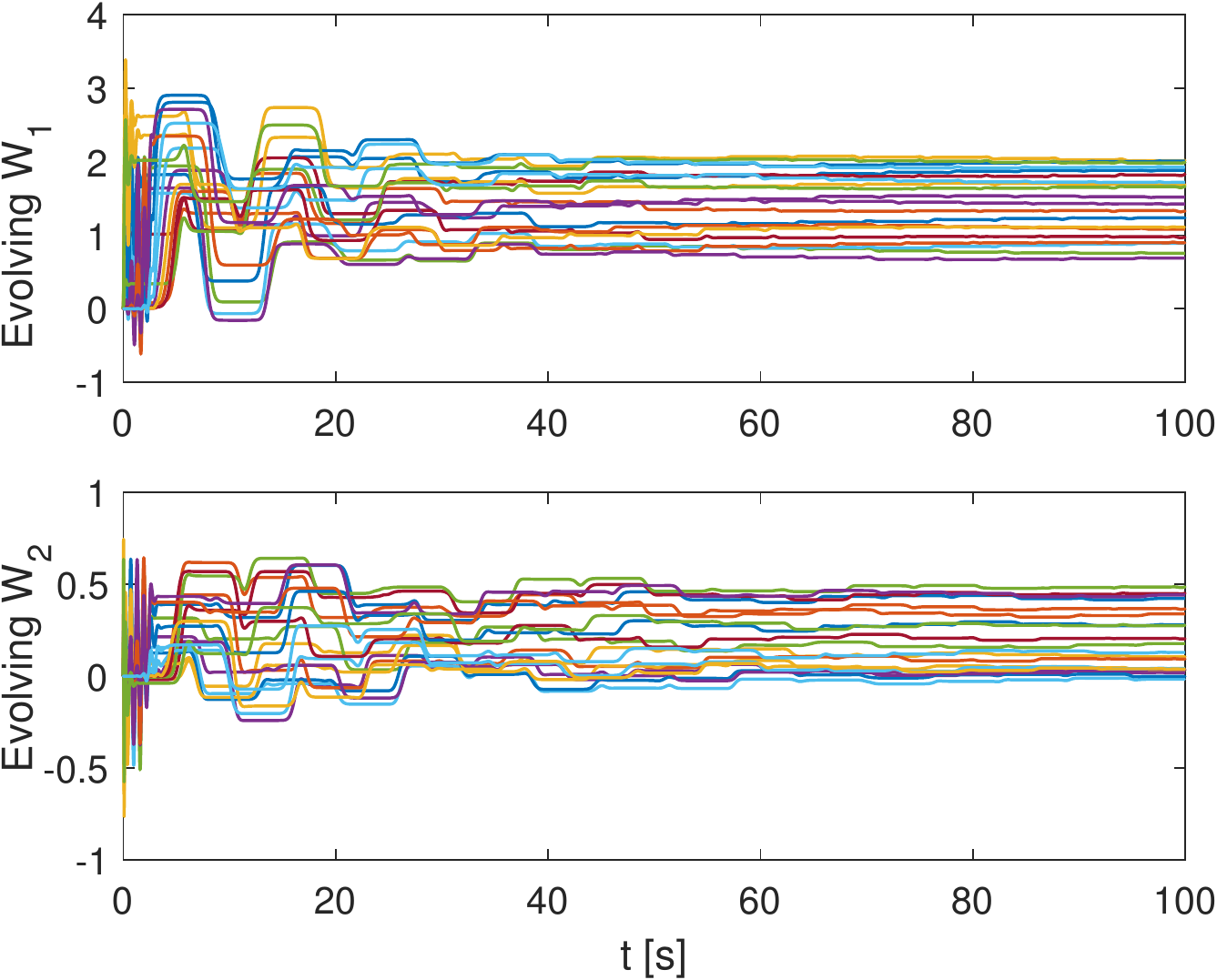}}
\subfigure[]{\includegraphics[width=3.3in]{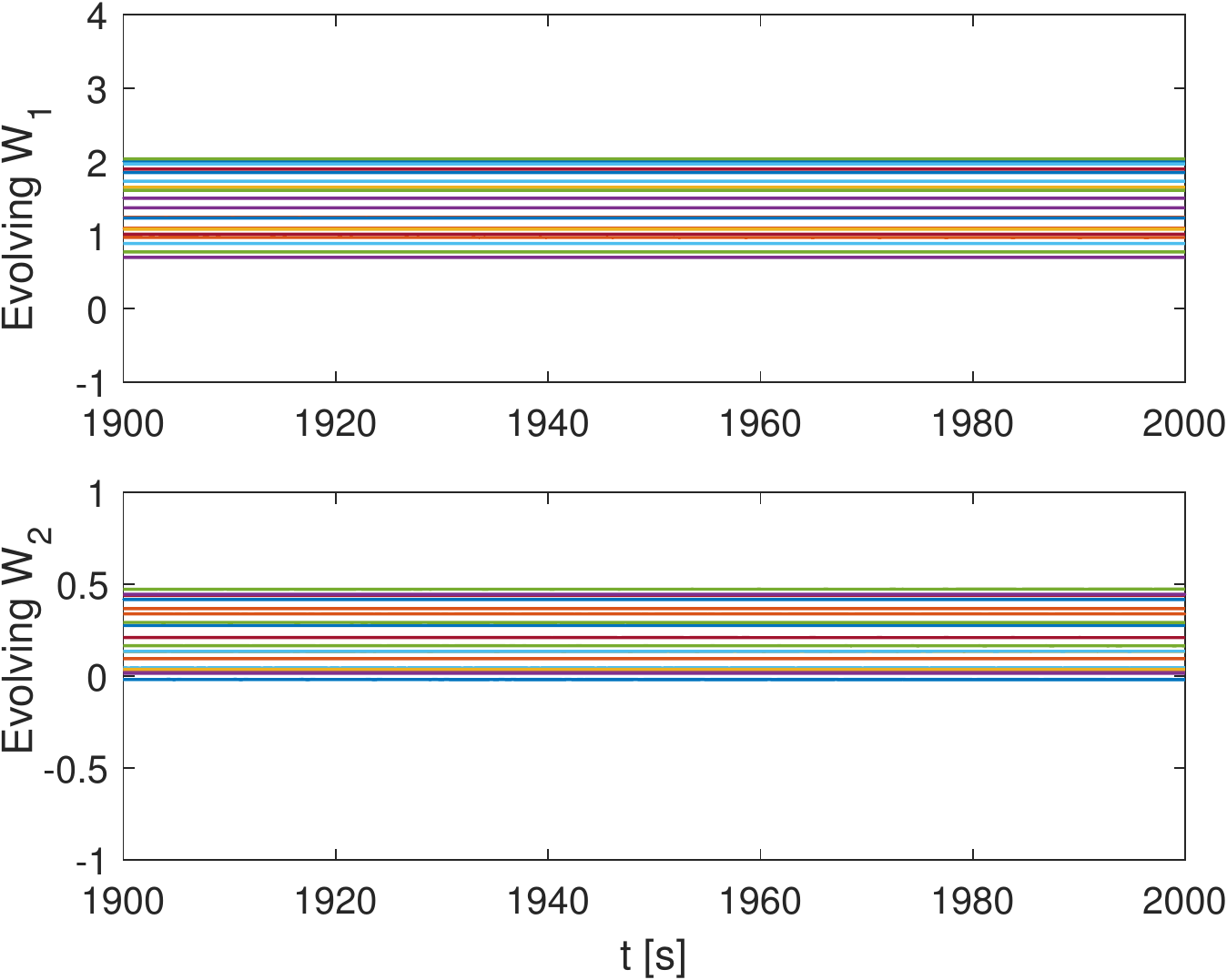}}
\caption{ The learning trajectories of the adaptation weights.}
\label{learning trajectories}
\end{figure*}

\section{Why Can the Proposed Method be Treated as an Enhanced PID  Control?} \label{Discussions}

The proposed control scheme shares a similar rationality to that of the classical PID control in  two special cases, which can thus be seen as an enhanced PID scheme with a better approximation ability.
	To simplify the analysis, we only consider the controller under the circumstance of $\delta=0$ when $\|\hat{W}_{i}\|\leq W_{0}$. 
	 One of the outputs of the adaptive feedforward RBFNN controller \eqref{tau2} can be reshaped to an integral format by substituting learning algorithm \eqref{adaWS}:
\begin{align}
\tau_{i} =&  K_{2i} e_{2i} + \hat{W}_i^T S(Z_d)  \nonumber \\ \label{t_i}
          =& K_{2i} e_{2i} + \Gamma S(Z_d)^T \int S(Z_d)  e_{2i} dt.
\end{align}
	The equation (40) shows that the adaptive feedforward RBFNN control contains a PD term and an integral term. Compared with the integral term in PID  control, the integral term in adaptive feedforward RBFNN control is more complex and has a better approximation capability.

The relations between PID control and adaptive feedforward RBFNN control are comprehended from the following two perspectives:
\begin{itemize}
\item[1)]  When $\sigma \rightarrow \infty$, $S(Z_d)\rightarrow 1_{m\times1}$, the targeted function can be approximated by $F(Z)=W^{*T} 1_{m\times1} +\epsilon(Z)$. The optimal approximated value is $W^{*T} 1_{m\times1} = \frac{\int_{t1}^{t2} F(Z)dt}{t2-t1}$, which means that the degraded RBFNN is only able to approximate constants or horizontal lines. Thus, the controller
 \eqref{t_i} degrades to 
\begin{equation}  
\tau_{i} = K_{2i} e_{2i} + \Gamma m \int e_{2i} dt,
\end{equation} which is  a PID controller.  Decreasing the value of $\sigma$  will enhance the local response of RBFNNs, and thus improve the  approximation accuracy of RBFNNs.
  Then the controller is transformed from a PID  controller to an adaptive feedforward RBFNN controller.

\item[2)]   Consider the simplest circumstance under the desired state $q_d=c_{n\times1}, \;\dot{q}_d=0_{n\times1},\; \ddot{q}_d=0_{n\times1}$, where $(\cdot)_{n\times 1}$ means  an $n\times 1$ dimension vector with a constant value.  There is only one hidden node required to achieve the approximation, and the position of the hidden node is set to be coincident with the desired state position $Z_d=[c_{n\times1}^T,0_{n\times1}^T,0_{n\times1}^T]^T$.
 Under this circumstance,  $S(Z_d)=1$ and the controller degrades to 
\begin{equation}  
\tau_{i} = K_{2i} e_{2i} + \Gamma \int e_{2i} dt,
\end{equation} 
which indicates that the adaptive feedforward RBFNN control for the simplest set point tracking problem is the same as PID  control.  
	We   also get an interesting but a bit weird conclusion that the integral term of a PID  controller satisfies the same PE condition as that of the adaptive feedforward RBFNN controller.  For a PID controller,  $S(Z_d)=1$.  According to definition \ref{definition_PE}, we have  
\begin{equation}  
 \int_{t_{0}}^{t_{0}+T_{0}} S(Z_d) S(Z_d)^{T} d (Z_d) = T_0,
\end{equation} and both $\alpha_{1}$ and $\alpha_{2}$ are equal to $T_0$. 
	 This is also indirectly proved by the exponential stability of the PID controller in \cite{Rocco1996, ALVAREZRAMIREZ200073}.

\end{itemize}

\section{Simulation} \label{Simulation}
Four types of controllers, including the PID  controller, the model-based feedforward (MBFF) controller, the adaptive feedforward RBFNN controller with a lattice distribution of hidden nodes (RBFNN$-$L), and  the adaptive feedforward RBFNN controller with an optimized distribution of hidden nodes (RBFNN$-$O) have been implemented on simulations of a 2-DOF robotic manipulator adopted from Section $3.6.1$ of \cite{geadaptive} to show the superiority of our proposed RBFNN controller.  
 The initial state is $q_1=q_2=0$ and $\dot{q}_1=\dot{q}_2=0$. 
	The desired trajectories are $q_{d1}=\sin(t)$ and $q_{d2}=\cos(t)$. 
 The control gains are  $K_1=[10,0;0,6]$ and $K_2=[3,0;0,1.8]$. The step sizes of those simulations  are $0.01s$.   In both RBFNN-L and RBFNN-O controller, we set the learning rate $\Gamma=6$, the width $\sigma=1.1$, the initial weight $W_1= W_2 = \textbf{0}$, and $W_0=10$.

\begin{itemize}
\item[1)]\textbf{PID}: The control law of the PID controller is $\tau=K_2 r + K_I \int r$, where the control gain $K_I=[0.05,0;0,0.05]$.
\item[2)]\textbf{MBFF}: The control law of the  model-based feedforward controller is $\tau = K_{2} r+ M(q_d)\ddot{q_d}+C(q_d,\dot{q}_d)\dot{q}_d+G(q_d)$, in which the dynamics parameters are accurate.
\item[3)]\textbf{RBFNN-L}: The controller law of the adaptive feedforward RBFNN controller with a lattice distribution of hidden nodes is $\tau =  K_{2} e_2 +\hat{W}^T  S(Z_d)$. $3^6$ hidden nodes are located at  $[-1,0,1]\times[-1,0,1]\times[-1,0,1]\times[-1,0,1]\times[-1,0,1]\times[-1,0,1]$. 
\item[4)]\textbf{RBFNN-O}: The controller law of the adaptive feedforward RBFNN controller with an optimized distribution of hidden nodes is $\tau =  K_{2} e_2 +\hat{W}^T  S(Z_d)$. $20$ hidden nodes are selected.
When calculating the distribution of hidden nodes, the input data was generated from the desired trajectory at the beginning, and then we utilize the K-means algorithm to calculate the distribution of the $20$ hidden nodes. 
\end{itemize}

This paper wants to show  that the proposed controller has better tracking performance than the model-based controller with an accurate dynamics.  The number of hidden nodes is selected as 20  so as to make the RBFNN have better approximation ability. In fact, 10 hidden nodes are enough to obtain decent tracking performance. 
		

  The simulation results are provided as follows.
The	PID, MBFF, and RBFNN-L controllers are selected as the baseline. 
	The tracking performance of the four controllers are as shown in Figs. \ref{Tracking_performance_1} and \ref{Tracking_performance_2}.
The outputs of these controllers  are given in Fig. \ref{control outputs}.
From the simulation results during $0-20s$, we can see that the PID controller only has the primary tracking performance, and the tracking errors of the MBFF controller  converge the fastest. 
	The convergence speeds of both the RBFNN-L and the RBFNN-O controller  are slower than that of the MBFF controller, and the tracking errors of both the RBFNN-L and the RBFNN-O controller are worse than that of the MBFF controller during $0-20s$. 
	However, after enough time to converge, the RBFNN-O controller has smaller tracking errors than the MBFF controller, and the RBFNN-L controller cannot achieve such a perfect performance, which are shown in the simulation results during $1980-2000s$.
	The approximation performance of the adaptive feedforward RBFNN controller is shown in Fig. \ref{Approximation_performance}, from which the approximation errors converge to small intervals.
 $\hat{W}^T  S(Z_d)$ approximates the desired dynamics  such that the proposed controller achieves the accurate tracking performance. It can be verified that the outputs of the RBFNN are about $5 $ and $0.6$   which  account for more than 99\% of the outputs of the controller, whereas the outputs of the PD term are less than $1 \times 10^{-3}$.

	It is weird intuitively that the MBFF controller using precise dynamics has inferior performance in simulation results, whereas the RBFNN-O controller with unknown dynamics generates much better performance.
 In the simulation, the step size is $0.01s$,  which introduces sampling errors.
  The sampling errors in the MBFF controller are not considered, whereas the errors can be well approximated by the adaptive feedforward RBFNN controller.
 This is the main reason why the RBFNN-O controller can prevail over the MBFF controller. 
 
 Intuitively, we use $\hat{W}  S(Z_d)- \big(M(q_d)\ddot{q}_d+C(q_d,\dot{q}_d)\dot{q}_d+G(q_d)\big)$ to represent the approximation errors of the RBFNN. However, the representing form does not consider the sampling errors exiting in the simulation. It is reasonable when the tracking errors are significant because the proportions of the sampling errors in the approximation errors are very small. But the proportions becomes considerable when the tracking errors are getting close to zeros.
	 From Figs. \ref{Tracking_performance_1}, \ref{Tracking_performance_2}, and \ref{Approximation_performance}, when the tracking errors are  less than $2 \times 10^{-4}$, the approximation errors represented by $K_{2} e_2$ are smaller than the ones represented by $\hat{W}^T S(Z_d)- \big(M(q_d)\ddot{q}_d+C(q_d,\dot{q}_d)\dot{q}_d+G(q_d)\big)$. 
	The closed-loop system can be rewritten as $ M(q)\ddot{q}+C(q,\dot{q})\dot{q}+G(q)=K_{2} e_2+ \hat{W}  S(Z_d)$; and the ideal case of the tracking problem is  $K_2 e_2=0$ and $M(q)\ddot{q}+C(q,\dot{q})\dot{q}+G(q)=\hat{W}^T S(Z_d)$. 
	 In reality, this ideal case is almost impossible to achieve. 	
	Hence, during the stable stage, when the tracking errors are small, $K_{2} e_2$ can be used as an indirect index to roughly represent the approximation errors of the closed-loop system.

\begin{table}[!t]
\renewcommand{\arraystretch}{1.3}
\caption{Comparisons of performance indices for three adaptive RNFNNs controllers from $1990s-2000s$}
\label{comparison_simulation}
\centering
\begin{threeparttable}
\begin{tabular}{*{11}{|c|}}
\hline
 \multirow{2}*{Controller}  &  \multicolumn{4}{c|
 |}{Performance indexes} \\
\cline{2-5} 
                & MAAE$_1$\tnote{1}   & MATE$_{1}$\tnote{2} & MAAE$_2$\tnote{3}   & MATE$_{2}$\tnote{4} \\
\hline
PID  & $ 1.32 $ & $0.0432$ &$0.374$ & $ 0.0327$ \\
\hline
MBFF\tnote{5} &  $0.00209$ & $0.0000685$ &   $ 0.000581$   & $0.0000513$\\
\hline
RBFNN-L\tnote{6} & $ 0.0198 $ & $0.000411$ &$0.0033$ & $0.00014$\\
\hline
RBFNN-O\tnote{7} & $ 0.000937 $ & $0.0000267$ &$0.000248$ & $0.0000136$\\
\hline
\end{tabular}
\begin{tablenotes}
\item[1,3] MAAE$_1$ and MAAE$_2$: the maximum absolute approximate error with respect to links 1 and 2, respectively. 
\item[2,4] MATE$_{1}$ and MATE$_{2}$: the maximum absolute tracking error with respect to links 1 and 2, respectively.
\item[5,6,7] MBFF, RBFNN-L, RBFNN-O: the model-based feedforward controller, the adaptive feedforward RBFNN controller with a lattice distribution of hidden nodes, and  the adaptive feedforward RBFNN controller with an optimized distribution of hidden nodes, respectively.
\end{tablenotes}

\end{threeparttable}
\end{table}
  
\begin{remark}
 The MAAE values of  the MBFF controller in Table \ref{comparison_simulation} are not zeros because we utilize $K_2 e_2$ to represent the discrete error of the precise dynamics of the system in the stable stage. 
 Although other engineering tools  can also reduce the discrete errors, this paper applies a relatively simpler method because we aim to show that the adaptive feedforward RBFNN controller can approximate the discrete errors.   
\end{remark}  

The evolutions of the weights $\hat{W}$ are presented in Fig. \eqref{learning trajectories}, which clearly shows that all the weights converge to some constant values after enough time to converge; this also indirectly proves all the hidden nodes satisfy the PE condition.
   For link 1, all values of the weights of the RBFNN are larger than $0.5$, which means that all hidden nodes make significant contributions to the approximation process.
  There are still $5$ weights close to zeros for link 2, whereas this does not means that the corresponding hidden nodes did not satisfy the PE condition. 
  This is because the target function for link $2$ has a small value less than $2N$.
However, for the traditional schemes only satisfying the partial PE condition, most weights of the corresponding hidden nodes, which do not satisfy the PE condition, are  close to zeros.  Only a small part of the weights, of which the corresponding hidden nodes satisfy the PE condition, have considerably high values \cite{pan2016biomimetic, CongWang2006,  Wang2017PElevel}.
 This means that the corresponding hidden nodes not satisfying the PE are not necessary in the approximation process, causing a possible waste of  the computing resource.

	A qualitative comparison of the stable stage performance indices  among four controllers is given in Table \ref{comparison_simulation}. This table shows that both the tracking and approximation performance during the stable stage of the PID controller are the worst among  the four controllers.
	For link $1$, the approximation error and the tracking error of the MBFF controller are $2.2$ times and $2.5$ times larger than those  of the RBFNN-O controller, respectively.
	For link $2$, the approximation error and the tracking error of the MBFF controller are $2.4$ times and $3.7$ times larger than those of the  RBFNN-O controller, respectively.
	The tracking performance of the RBFNN-L controller is   worse than that of both the  RBFNN-O controller and the MBFF controller.

	
\begin{remark}	It is  rather easy to tune our algorithm during the simulation since it is only a bit more complex than a PID controller.
 The tuning method of the control gains of PD terms in the proposed RBFNN scheme is almost the same as that of a PID controller.  
The tuning method for the learning rate in the proposed RBFNN scheme is similar to the integral term of a PID controller. 
\end{remark}  

\section{Conclusion} \label{Conclusion}
This paper has successfully developed an adaptive feedforward RBFNN control strategy for robot manipulators with unknown dynamics. This scheme satisfies the standard PE condition of the RBFNN, while considerably reduces the number of hidden nodes.  	
	The proposed control scheme shares a similar rationality to that of the classic PID control in two special cases,  which can thus be seen as an enhanced PID control with a better approximation ability. 	
	The tuning method of PID can be easily transplanted and applied to our scheme.	
	 Simulation results demonstrate that the tracking performance of our proposed RBFNN controller with unknown dynamics is even better than that of a model-based controller with accurate parameters.
This method could be extended to non-periodic problems by utilizing heuristic algorithms to generate the distribution of hidden nodes  in possible future works.	

\bibliographystyle{IEEEtran}
\bibliography{IEEEabrv,reference}

\begin{thebibliography}{10}
\providecommand{\url}[1]{#1}
\csname url@samestyle\endcsname
\providecommand{\newblock}{\relax}
\providecommand{\bibinfo}[2]{#2}
\providecommand{\BIBentrySTDinterwordspacing}{\spaceskip=0pt\relax}
\providecommand{\BIBentryALTinterwordstretchfactor}{4}
\providecommand{\BIBentryALTinterwordspacing}{\spaceskip=\fontdimen2\font plus
\BIBentryALTinterwordstretchfactor\fontdimen3\font minus
  \fontdimen4\font\relax}
\providecommand{\BIBforeignlanguage}[2]{{%
\expandafter\ifx\csname l@#1\endcsname\relax
\typeout{** WARNING: IEEEtran.bst: No hyphenation pattern has been}%
\typeout{** loaded for the language `#1'. Using the pattern for}%
\typeout{** the default language instead.}%
\else
\language=\csname l@#1\endcsname
\fi
#2}}
\providecommand{\BIBdecl}{\relax}
\BIBdecl

\bibitem{peng2019force}
G.~Peng, C.~Yang, W.~He, and C.~P. Chen, ``Force sensorless admittance control
  with neural learning for robots with actuator saturation,'' \emph{IEEE
  Transactions on Industrial Electronics}, vol.~67, no.~4, pp. 3138--3148, Apr.
  2019.

\bibitem{Hewei2018}
W.~{He} and Y.~{Dong}, ``Adaptive fuzzy neural network control for a
  constrained robot using impedance learning,'' \emph{IEEE Transactions on
  Neural Networks and Learning Systems}, vol.~29, no.~4, pp. 1174--1186, Apr.
  2018.

\bibitem{shi_P_2020_Cybernetics}
J.~{Ni} and P.~{Shi}, ``Adaptive neural network fixed-time leader-follower
  consensus for multiagent systems with constraints and disturbances,''
  \emph{IEEE Transactions on Cybernetics}, pp. 1--14, Feb. 2020.

\bibitem{arabi2019neuroadaptive}
E.~Arabi, T.~Yucelen, B.~C. Gruenwald, M.~Fravolini, S.~Balakrishnan, and N.~T.
  Nguyen, ``A neuroadaptive architecture for model reference control of
  uncertain dynamical systems with performance guarantees,'' \emph{Systems \&
  Control Letters}, vol. 125, pp. 37--44, Mar. 2019.

\bibitem{ge2001stable}
S.~S. Ge, C.~C. Hang, T.~H. Lee, and T.~Zhang, \emph{Stable adaptive neural
  network control}.\hskip 1em plus 0.5em minus 0.4em\relax Springer Science \&
  Business Media, 2002, vol.~13.

\bibitem{CongWang2006}
{Cong Wang} and D.~J. {Hill}, ``Learning from neural control,'' \emph{IEEE
  Transactions on Neural Networks}, vol.~17, no.~1, pp. 130--146, Jan. 2006.

\bibitem{slotine1989composite}
J.~E. Slotine and W.~Li, ``Composite adaptive control of robot manipulators,''
  \emph{Automatica}, vol.~25, no.~4, pp. 509--519, July 1989.

\bibitem{hewei7994622}
W.~{He}, H.~{Huang}, and S.~S. {Ge}, ``Adaptive neural network control of a
  robotic manipulator with time-varying output constraints,'' \emph{IEEE
  Transactions on Cybernetics}, vol.~47, no.~10, pp. 3136--3147, Oct. 2017.

\bibitem{ren2010adaptive}
B.~Ren, S.~S. Ge, K.~P. Tee, and T.~H. Lee, ``Adaptive neural control for
  output feedback nonlinear systems using a barrier lyapunov function,''
  \emph{IEEE Transactions on Neural Networks}, vol.~21, no.~8, pp. 1339--1345,
  July 2010.

\bibitem{gao8879661}
J.~{Qiu}, K.~{Sun}, I.~J. {Rudas}, and H.~{Gao}, ``Command filter-based
  adaptive nn control for mimo nonlinear systems with full-state constraints
  and actuator hysteresis,'' \emph{IEEE Transactions on Cybernetics}, vol.~50,
  no.~7, pp. 2905--2915, July 2020.

\bibitem{yang2018robot}
C.~Yang, C.~Chen, W.~He, R.~Cui, and Z.~Li, ``Robot learning system based on
  adaptive neural control and dynamic movement primitives,'' \emph{IEEE
  transactions on neural networks and learning systems}, vol.~30, no.~3, pp.
  777--787, Mar. 2018.

\bibitem{Sanner1992}
R.~M. {Sanner} and J.~E. {Slotine}, ``Gaussian networks for direct adaptive
  control,'' \emph{IEEE Transactions on Neural Networks}, vol.~3, no.~6, pp.
  837--863, Nov. 1992.

\bibitem{zhao2007locally}
Y.~Zhao and J.~A. Farrell, ``Locally weighted online approximation-based
  control for nonaffine systems,'' \emph{IEEE Transactions on Neural Networks},
  vol.~18, no.~6, pp. 1709--1724, Nov. 2007.

\bibitem{tee2009barrier}
K.~P. Tee, S.~S. Ge, and E.~H. Tay, ``Barrier lyapunov functions for the
  control of output-constrained nonlinear systems,'' \emph{Automatica},
  vol.~45, no.~4, pp. 918--927, Apr. 2009.

\bibitem{liu2016barrier}
Y.~J. Liu and S.~Tong, ``Barrier lyapunov functions-based adaptive control for
  a class of nonlinear pure-feedback systems with full state constraints,''
  \emph{Automatica}, vol.~64, pp. 70--75, Feb. 2016.

\bibitem{Gao8811752}
T.~{Gao}, Y.~{Liu}, D.~{Li}, S.~{Tong}, and T.~{Li}, ``Adaptive neural control
  using tangent time-varying {BLFs} for a class of uncertain stochastic
  nonlinear systems with full state constraints,'' \emph{IEEE Transactions on
  Cybernetics}, pp. 1--11, Aug. 2019.

\bibitem{huang2019motor}
H.~Huang, T.~Zhang, C.~Yang, and C.~P. Chen, ``Motor learning and
  generalization using broad learning adaptive neural control,'' \emph{IEEE
  Transactions on Industrial Electronics}, vol.~67, no.~10, pp. 8608--8617,
  Oct. 2019.

\bibitem{Wang2015PElevel}
M.~{Wang} and C.~{Wang}, ``Learning from adaptive neural dynamic surface
  control of strict-feedback systems,'' \emph{IEEE Transactions on Neural
  Networks and Learning Systems}, vol.~26, no.~6, pp. 1247--1259, June 2015.

\bibitem{Wang2016PElevel}
M.~{Wang}, C.~{Wang}, P.~{Shi}, and X.~{Liu}, ``Dynamic learning from neural
  control for strict-feedback systems with guaranteed predefined performance,''
  \emph{IEEE Transactions on Neural Networks and Learning Systems}, vol.~27,
  no.~12, pp. 2564--2576, Dec. 2016.

\bibitem{Wang2017PElevel}
T.~{Zheng} and C.~{Wang}, ``Relationship between persistent excitation levels
  and {RBF} network structures, with application to performance analysis of
  deterministic learning,'' \emph{IEEE Transactions on Cybernetics}, vol.~47,
  no.~10, pp. 3380--3392, Oct. 2017.

\bibitem{Wang2019_he}
C.~{Yuan}, H.~{He}, and C.~{Wang}, ``Cooperative deterministic learning-based
  formation control for a group of nonlinear uncertain mechanical systems,''
  \emph{IEEE Transactions on Industrial Informatics}, vol.~15, no.~1, pp.
  319--333, Jan. 2019.

\bibitem{chen_wang2019}
T.~{Chen}, D.~J. {Hill}, and C.~{Wang}, ``Distributed fast fault diagnosis for
  multimachine power systems via deterministic learning,'' \emph{IEEE
  Transactions on Industrial Electronics}, vol.~67, no.~5, pp. 4152--4162, May
  2020.

\bibitem{slotine1987on}
J.~E. Slotine and W.~Li, ``On the adaptive control of robot manipulators,''
  \emph{The International Journal of Robotics Research}, vol.~6, no.~3, pp.
  49--59, Sep. 1987.

\bibitem{Chae1987}
{Chae An}, C.~{Atkeson}, J.~{Griffiths}, and J.~{Hollerbach}, ``Experimental
  evaluation of feedforward and computed torque control,'' in
  \emph{Proceedings. 1987 IEEE International Conference on Robotics and
  Automation}, vol.~4, Mar. 1987, pp. 165--168.

\bibitem{khosla1988experimental}
P.~K. Khosla and T.~Kanade, ``Experimental evaluation of nonlinear feedback and
  feedforward control schemes for manipulators,'' \emph{The International
  Journal of Robotics Research}, vol.~7, no.~1, pp. 18--28, Feb. 1988.

\bibitem{reyes2001experimental}
F.~Reyes and R.~Kelly, ``Experimental evaluation of model-based controllers on
  a direct-drive robot arm,'' \emph{Mechatronics}, vol.~11, no.~3, pp.
  267--282, Apr. 2001.

\bibitem{chen2012globally}
W.~Chen, L.~Jiao, and J.~Wu, ``Globally stable adaptive robust tracking control
  using {RBF} neural networks as feedforward compensators,'' \emph{Neural
  Computing and Applications}, vol.~21, no.~2, pp. 351--363, Oct. 2012.

\bibitem{pan2016hybrid}
Y.~Pan, Y.~Liu, B.~Xu, and H.~Yu, ``Hybrid feedback feedforward: An efficient
  design of adaptive neural network control,'' \emph{Neural Networks}, vol.~76,
  pp. 122--134, Apr. 2016.

\bibitem{pan2016biomimetic}
Y.~Pan and H.~Yu, ``Biomimetic hybrid feedback feedforward neural-network
  learning control,'' \emph{IEEE transactions on neural networks and learning
  systems}, vol.~28, no.~6, pp. 1481--1487, Mar. 2016.

\bibitem{sun2018fuzzy}
C.~Sun, H.~Gao, W.~He, and Y.~Yu, ``Fuzzy neural network control of a flexible
  robotic manipulator using assumed mode method,'' \emph{IEEE transactions on
  neural networks and learning systems}, vol.~29, no.~11, pp. 5214--5227, Nov.
  2018.

\bibitem{zeng2014learning}
W.~Zeng and C.~Wang, ``Learning from nn output feedback control of robot
  manipulators,'' \emph{Neurocomputing}, vol. 125, pp. 172--182, Feb. 2014.

\bibitem{ortega2013passivity}
R.~Ortega, J.~A.~L. Perez, P.~J. Nicklasson, and H.~J. Sira-Ramirez,
  \emph{Passivity-based control of Euler-Lagrange systems: mechanical,
  electrical and electromechanical applications}.\hskip 1em plus 0.5em minus
  0.4em\relax Springer Science \& Business Media, 2013.

\bibitem{Kurdila1995}
A.~J. Kurdila, F.~J. Narcowich, and J.~D. Ward, ``Persistency of excitation in
  identification using radial basis function approximants,'' \emph{SIAM J.
  Control Optim.}, vol.~33, no.~2, p. 625–642, Mar. 1995.

\bibitem{lu1998robust}
S.~Lu and T.~Basar, ``Robust nonlinear system identification using
  neural-network models,'' \emph{IEEE Transactions on Neural networks}, vol.~9,
  no.~3, pp. 407--429, May 1998.

\bibitem{zheng2017relationship}
T.~Zheng and C.~Wang, ``Relationship between persistent excitation levels and
  rbf network structures, with application to performance analysis of
  deterministic learning,'' \emph{IEEE transactions on cybernetics}, vol.~47,
  no.~10, pp. 3380--3392, 2017.

\bibitem{Xian2004ACA}
B.~Xian, D.~M. Dawson, M.~S. de~Queiroz, and J.~J. Chen, ``A continuous
  asymptotic tracking control strategy for uncertain nonlinear systems,''
  \emph{IEEE Transactions on Automatic Control}, vol.~49, pp. 1206--1211, July
  2004.

\bibitem{Queiroz1997}
M.~S. {De Queiroz}, {Jun Hu}, D.~M. {Dawson}, T.~{Burg}, and S.~R. {Donepudi},
  ``Adaptive position/force control of robot manipulators without velocity
  measurements: theory and experimentation,'' \emph{IEEE Transactions on
  Systems, Man, and Cybernetics, Part B (Cybernetics)}, vol.~27, no.~5, pp.
  796--809, Sep. 1997.

\bibitem{weihe2020}
W.~{He}, Y.~{Sun}, Z.~{Yan}, C.~{Yang}, Z.~{Li}, and O.~{Kaynak}, ``Disturbance
  observer-based neural network control of cooperative multiple manipulators
  with input saturation,'' \emph{IEEE Transactions on Neural Networks and
  Learning Systems}, vol.~31, no.~5, pp. 1735--1746, May 2020.

\bibitem{FarrellA1998}
J.~A. {Farrell}, ``Stability and approximator convergence in nonparametric
  nonlinear adaptive control,'' \emph{IEEE Transactions on Neural Networks},
  vol.~9, no.~5, pp. 1008--1020, Sep. 1998.

\bibitem{khalil2002nonlinear}
H.~K. Khalil, \emph{Nonlinear Systems}.\hskip 1em plus 0.5em minus 0.4em\relax
  Upper Saddle River, NJ, USA: Prentice Hall, 2002.

\bibitem{Rocco1996}
P.~{Rocco}, ``Stability of {PID} control for industrial robot arms,''
  \emph{IEEE Transactions on Robotics and Automation}, vol.~12, no.~4, pp.
  606--614, Aug. 1996.

\bibitem{ALVAREZRAMIREZ200073}
J.~Alvarez-Ramirez, I.~Cervantes, and R.~Kelly, ``{PID} regulation of robot
  manipulators: stability and performance,'' \emph{Systems \& Control Letters},
  vol.~41, no.~2, pp. 73 -- 83, Oct. 2000.

\bibitem{geadaptive}
S.~S. Ge and C.~J. Harris, \emph{Adaptive Neural Network Control of Robotic
  Manipulators}.\hskip 1em plus 0.5em minus 0.4em\relax World Scientific, 1998.

\end{thebibliography}
\end{document}